\newtheorem{lemma}{Lemma}
\newtheorem{theorem}{Theorem}
\newcommand{\eref}[1]{(\ref{#1})}
\newcommand{\sref}[1]{Section~\ref{#1}}
\newcommand{\fref}[1]{Figure~\ref{#1}}
\newcommand{\cref}[1]{Constraint~\ref{#1}}
\newcommand{\thref}[1]{Theorem~\ref{#1}}
\newcommand{\tref}[1]{Table~\ref{#1}}
\newcommand{\ignore}[1]{}
\newcommand*{\Perm}[2]{{}^{#1}\!P_{#2}}%
\begin{document}
\IEEEoverridecommandlockouts

\title{\vspace{-.5cm}Throughput Maximization in Cloud-Radio Access Networks using Rate-Aware Network Coding}

\author{
 \IEEEauthorblockN{Mohammed S. Al-Abiad, Ahmed Douik, \textit{Student Member, IEEE}, \\Sameh Sorour, \textit{Senior Member, IEEE} and Md Jahangir Hossain, \textit{Senior Member, IEEE}\vspace{-.8cm}}

\thanks {
A part of this paper \cite{1} is accepted in IEEE International Conference on Communications Workshops (ICCW' 2018), Kansas City, MO, USA.

Mohammed S. Al-Abiad and MD Jahangir Hossain are with the School of Engineering, University of British Columbia, Kelowna, BC V1V 1V7, Canada (e-mail: m.saif@alumni.ubc.ca, jahangir.hossain@ubc.ca).

Ahmed Douik is with the Department of Electrical Engineering, California Institute of Technology, Pasadena, CA 91125 USA (e-mail: ahmed.douik@caltech.edu).

Sameh Sorour is with the Department of Electrical and Computer Engineering, University of Idaho, Moscow, ID 83844, USA (e-mail: samehsorour@uidaho.edu)
}}

\maketitle

\begin{abstract}
One of the most promising techniques for network-wide interference management necessitates a redesign of the network architecture known as cloud radio access network (CRAN). The cloud is responsible for coordinating multiple Remote Radio Heads (RRHs) and scheduling users to their radio resources blocks (RRBs). The transmit frame of each RRH consists of several orthogonal RRBs each maintained at a certain power level (PL). While previous works considered a vanilla version in which each RRB can serve a single user, this paper proposes mixing the flows of multiple users using instantly decodable network coding (IDNC). As such, the total throughput is maximized. The joint user scheduling and power adaptation problem is solved by designing, for each RRB, a subgraph in which each vertex represents potential user-RRH associations, encoded files, transmission rates, and PLs for one specific RRB. It is shown that the original problem is equivalent to a maximum-weight clique problem over the union of all subgraphs, called herein the CRAN-IDNC graph. Extensive simulation results are provided to attest the effectiveness of the proposed solution against state of the art algorithms. In particular, the presented simulation results reveal that the method achieves substantial performance gains for all system configurations which collaborates the theoretical findings.
\end{abstract}

\begin{IEEEkeywords}
Cloud radio access networks, coordinated scheduling, instantly decodable network coding, power allocation.
\end{IEEEkeywords}

\section{Introduction} \label{sec:I}

\IEEEPARstart{O}{ver} the last decade, the continuously increasing demand for high-speed data transfer has been generating a severe burden on the wireless networks infrastructure. Moreover, the scarcity of the radio resources raises extra challenges for the Next Generation Mobile Networks 5G to meet the expected quality of service requirements \cite{Andrews_2014}. The steady move towards dense cellular architectures in 4G partially solved the problem but raised concerns regarding interference management. Cloud-radio access networks (CRANs) is one of the most promising techniques for the Next Generation Mobile Networks 5G due to their high capabilities in mitigating interference and thus providing high data rates \cite{Cai_2014}. In CRANs, a central computing unit, known as the \emph{cloud}, coordinates multiple remote radio heads (RRHs) of different sizes and possible from different tiers. Due to its centralized computation, CRANs display great ability in allocating the radio resources of the different RRHs resulting in efficient interference management.

The joint user scheduling and power adaptation in CRANs is shown to be notoriously challenging. Indeed, besides its combinatorial nature, the mathematical formulation reveals that the power adaptation is non-convex. Therefore, a large body of literature, e.g., \cite{4,5,8,18a,Park_13,park2013joint,shi2014group,14a} and references therein has been devoted to determining efficient throughput maximization schemes in different scenarios. A popular procedure to tackle the problem is to decouple the scheduling and the power adaptation problems and to iteratively solve them individually. This paper proposes both optimal and suboptimal solutions using joint and iterative approaches, respectively.

A primary limitation of the aforementioned works is their physical-layer view of the problem. Indeed, they ignore the available information in the network layer, e.g., prior file downloads. This results in assigning each resource slot to a single user, e.g., a single sub-carrier in orthogonal frequency division multiple access (OFDMA). Due to the asymmetry of the side information, coding opportunities may arise in the system which can be exploited to maximize the system throughput by mixing the flows of multiple users. Such encoding at the network layer is commonly referred to as Network Coding (NC) \cite{850663}. This paper proposes to optimally and heuristically solve the joint coordinated scheduling and power allocation problem by incorporating NC in the scheduling decisions. As such, the overall system throughput is maximized.

An important class of NC is the Opportunistic NC (ONC) \cite{19a,20a} in which the sender exploits the high diversity of previously possessed files to efficiently select the combination that would benefit a significant subset of users. Side information is widely available in modern LTE networks, due to several naturals reasons, such as erasures due to fast fading/shadowing during file transmissions, prior downloads of different popular files by different users, or protocol-based consequences, e.g., scheduling files to different users in different time epochs given their channel conditions. A particular exciting sub-class of ONC for real-time applications of interest in this paper is the instantly decodable network coding (IDNC). IDNC is a binary ONC sub-class that has engaged numerous studies, e.g., \cite{M.,A.,598452} thanks to its instant decodability properties and straightforward operations to encode/decode files using XOR-based. These simple and almost instantaneous binary operations make IDNC well-adapted for small and battery-powered devices. 

\subsection{Related Works}

As stated earlier, the coordinated scheduling problem is intrinsically combinatorial. Furthermore, determining the power levels for a pre-assigned schedule is known to be a non-convex problem. As a result, finding the optimal solution is challenging and not feasible by any polynomial-time algorithm. A large number of previous works solved the coordinated scheduling and power allocation problems separately. For example, authors of  \cite{4,5} proposed a scheduling algorithm in heterogeneous networks assuming a preassigned association of mobile users and RRHs, e.g., proportional fair scheduling.

Recent works on CRANs, e.g., \cite{Park_13,park2013joint,shi2014group}, suggested scheduling users to RRHs in a coordinated fashion by the cloud in order to maximize the network total ergodic capacity. The studies are extended in \cite{8} and references therein to include the joint optimization of the scheduling with the beamforming vectors and the power level of each radio resource. Recently, for the weighted rate maximization objective function, the authors in \cite{18a} suggested a graph theory technique to solve the joint optimization optimally.

All studies mentioned above view the network solely from the physical layer without taking into consideration upper layer facts. Therefore, each RRB serves a single user in each transmission instance. However, it has been observed that users tend to have a common interest in downloading popular files, especially videos, within a small interval of time, thus creating a pool of side information in the network. To the best of the authors' knowledge, maximizing the system throughput in a CRAN by using IDNC to allow files of different users to be scheduled simultaneously is innovative.

Different studies on IDNC revealed various code construction schemes with significant potential in minimizing multiple system parameters for different applications and network settings. For example, while the authors in \cite{M.} suggest reducing the total transmission time, i.e., the completion time, authors of \cite{S.} optimize the decoding delay. Similarly, the authors in \cite{A.} introduce a delay-based framework to reduce the completion time.

The use of IDNC in a CRAN setting brings a new set of challenges as the aim of these two techniques can be opposite. Indeed, by multiplexing their files and scheduling multiple users to the same radio resource block, the total number of targeted users increases. However, due to the heterogeneity of the achievable capacity of each user, the transmission rate of each resource block decreases. On the other hand, targeting a single user in CRAN can maximize the transmission rate of each RRB but misses the coding opportunities which could achieve further throughput and efficiency gains. Recently, IDNC had been employed in \cite{598452} in the context of a heterogeneous network setting to minimize the completion time of the users by jointly selecting the file combinations and transmission rates of each RRH. This paper aims to extend the study to the more practical and promising paradigm of CRAN. In particular, it investigates the cross-layer optimization in CRANs to achieve the maximum received throughput by jointly scheduling users to RRB/RRHs, and finding the optimal power levels of each RRB.

\subsection{Contributions}

In this work, we investigate the usage of an IDNC for throughput maximization in CRANs. Similar to the framework of \cite{4,5,8,18a}, we consider a scheduling-level coordinated CRAN in which each user can be associated to one RRH but can be served by multiple RRBs belonging to that RRH's frame. The main contributions and results of this work can be summarized as follows.

\begin{itemize}
\item Using a graph theoretical approach, we design a novel graph, called herein the CRAN-IDNC, which consists of multiple sub-graphs called power control subgraphs. Each subgraph represents the potential associations for a specific RRB wherein each vertex represents associations represented by a 6-tuple combination of RRH, RRB, user, file, transmission rate, and PL.  Such a CRAN-IDNC graph takes the following aspects into consideration:
\begin{itemize}
\item User Multiplexing: In each vertex, users are multiplexed to each RRB in each RRH based on their requests, i.e., mixing the flows of users, which is not addressed in the literature that looked at the problem from a physical-layer perspective. In this study, we consider the user multiplexing mechanism presented in \cite{598452} to deliver files to users.
\item Rate Adaptation: In order to benefit from the heterogeneity of the achievable capacity of each user, we consider the adaptive transmission rate mechanism in each vertex that represents the same RRB and the same RRH, such that each RRB selects the best transmission rate of all associated users. As such, throughput is maximized, and the QoS requirements of the end-users are maintained.
\item Optimal Power level: Power control solution is applied.  Specifically, since each vertex consists of many associations that represent scheduled users to the same RRB across all RRHs, we consider power control optimization for each vertex. This allows us to suppress the effect of interference that comes from the same RRB in different RRHs, which in turn improves the overall system throughput.
\end{itemize}
\item Using the designed CRAN-IDNC graph, the joint coordinated scheduling and power optimization problem is shown to be equivalent to a maximum-weight clique problem, which can be solved efficiently.
\item Due to the complexity of the optimal solution, we use the decoupling approach mentioned in the literature to approximate the joint optimization problem efficiently. In particular, for a fixed power, the coordinated scheduling problem is solved using similar graph theory techniques as proposed in \cite{1}. Afterward, for a fixed schedule, the power allocation problem is solved numerically. The process of iterating between coordinated scheduling and power allocation steps continues until convergence.
\item Using extensive simulations, we demonstrate the effectiveness of the proposed solution against state of the art algorithms. In particular, the presented simulation results reveal that the method achieves substantial performance gains for all system configurations which collaborates the theoretical findings.
\end{itemize}
 
The remainder of this paper is organized as follows. \sref{SMMM} introduces the considered system model and parameters. The joint scheduling and power allocation problem formulation and graph design are illustrated in \sref{C-RAN}. \sref{JSP} introduces the optimal joint solution and further designs a low-complexity heuristic. In order to further reduce the complexity, \sref{IOC} presents an iterative algorithm. Finally, before concluding in \sref{CC}, \sref{SR} plots and discusses the simulation results.

\section{System Model and Parameters} \label{SMMM}

\subsection{Cloud Radio Access Network Model}

This paper considers the downlink of a CRAN in which a computing unit is connected to a set $\mathcal{B}$ of $B$ RRHs distributed in different geographic locations within a cell and connected to the cloud through low-rate backhaul links. These $B$ RRHs serve a set $\mathcal{U}$ of $U$ mobile users in a single-hop transmission, i.e., all users are in the coverage of at least one RRH. For instance, \fref{fig1} shows a CRAN with $3$ RRHs cooperating to serve $6$ mobile users simultaneously.

The transmit frame of each RRH consists of $Z$ orthogonal time/frequency RRBs that are denoted by a set $\mathcal{Z}$ and shown in \fref{fig2}. Therefore, the total number of available RRBs in the system is $Z_{tot}=BZ$. Let $P_{bz}$ be the power allocation level (PL) of the $z$-th RRB in the $b$-th RRH. Let $\mathbf{P}=[P_{bz}]$ be a $B\times Z$ matrix containing the PLs of the considered network. From practical constraints, the power level of each RRB is bounded by $P_{bz}\leqslant P^{\max}_{bz}$. \ignore{herein for $z$-th RRB in the $b$-th RRH.} The cloud is responsible for scheduling users, synchronizing the transmission frames of all the RRHs, and determining the PL of each RRB. Due to the limited capacity of the backhaul links, each user can be assigned to at most one RRH, but possibly to many RRBs within its frame.

Let $h_{bz}^{u}(t)$ be the complex channel gain from the $z$-th RRB in the $b$-th RRH to the $u$-th user at the $t$-th transmission. The channel is assumed to be constant during the transmission time of a single uncoded/coded file and to change from one transmission to another. In our specific simulation set, we opted for the SUI model in which the channel information is affected by multiple factors, e.g., fading, shadowing, but the location of the user within the service area is the dominant factor. Such channel model leads to heterogeneous physical-layer rates from different RBBs/RRHs to different users. Moreover, we assume that there is no restriction on the model or the distribution of the channels. However, it uses the standard assumption that these values are perfectly estimated and available at the cloud. The ergodic capacity of the $u$-th user assigned to the $z$-th RRB in the $b$-th RRH can be expressed as:
\begin{equation}
R^{u}_{bz}(t)= \log_{2}(1+\text{SINR}^{u}_{bz}(\textbf{P})),
\end{equation}
wherein SINR$^{u}_{bz}(\textbf{P})$ is the corresponding signal-to-interference plus noise-ratio experienced by the $u$-th user when it is assigned with RRB $z$ of the $b$-th RRH, and can be expressed as:
\begin{equation}
\text{SINR}^{u}_{bz} (\textbf{P})= \cfrac{P_{bz} |h^{u}_{bz}|^{2}}{\sigma^2 +\sum\limits_{b^\prime\neq b}P_{b^\prime z}| h^{u}_{b^\prime z}|^{2}}, \label{e2}
\end{equation}
where $\sigma^{2}$ is the additive white Gaussian noise (AWGN) power. 

\begin{figure}[t]
\centering
\includegraphics[width=0.9\linewidth]{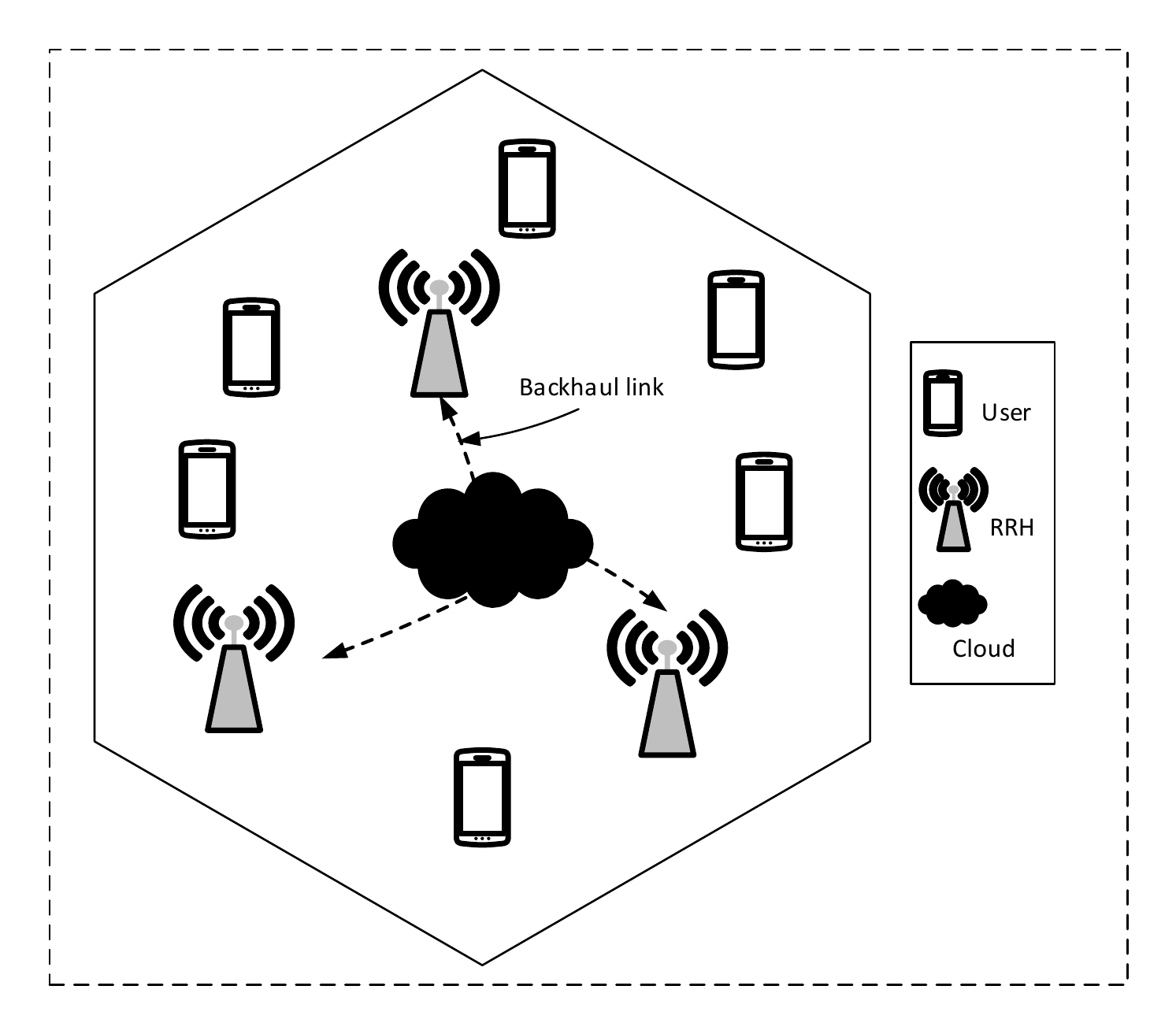}
\caption{A cloud radio access network composed of a computing unit, $6$ users, and $3$ RRHs. The cloud communicate with the RRHs through low-rate backhaul links.}
\label{fig1}
\end{figure}

\begin{figure}[t]
\centering
\includegraphics[width=0.9\linewidth]{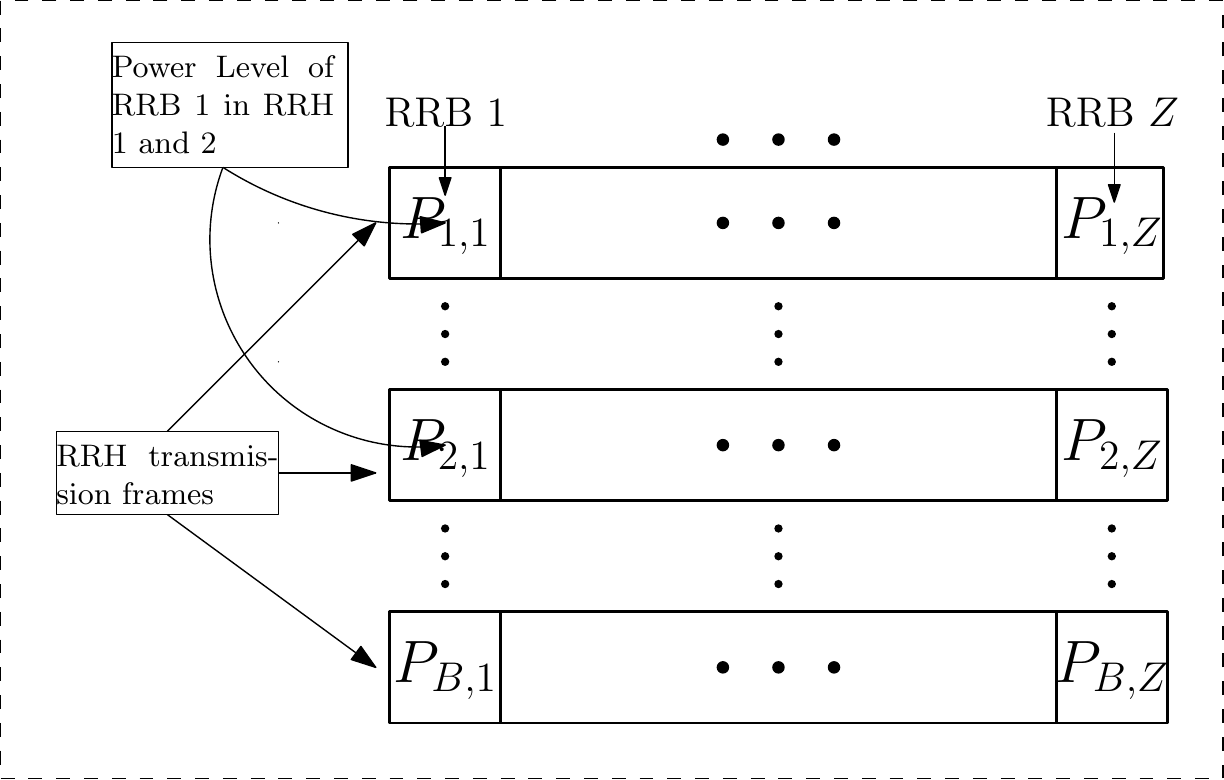}
\caption{Frame structure of the remote radio heads. Each RRH possesses $Z$ orthogonal radio resources synchronized with the RRBs of the remaining RRHs.}
\label{fig2}
\end{figure}

As stated earlier, the transmit frame of each RRH consists of $Z$ orthogonal RRBs. Therefore, interference at the $z$-th RRB is seen only from the same $z$-th RRB in the other RRHs. In other words, {SINR}$^{u}_{bz} (\textbf{P})$ depends solely on the scheduled users in $z$-th RRB across the remaining $b^\prime \neq b$ RRHs and the corresponding power level $P_{b^\prime z}$. We use in this work the standard perfect modulation assumption, i.e., the reception of an uncoded/encoded file sent in the $z$-th RRB of the $b$-th RRH is successful at the $u$-th mobile user if the transmission rate $R_{bz}$ is less than or equal the user's capacity, i.e., $R_{bz}\leqslant R_{bz}^u$. In other words, the $z$-th RRB of $b$-th RRH can transmit at a rate at most equal to the minimum ergodic capacity of its assigned users. The set of achievable capacities of all users in all RRBs across all RRHs can be represented by the set:
\begin{align}
\mathcal{R} = \bigotimes_{(b,z,u) \in \ \mathcal{\mathcal{B}} \times \mathcal{Z} \times \mathcal{\mathcal{U}}} R^u_{bz}, \label{eqw3}
\end{align} 
where the symbol $\bigotimes$ represents the product of the set of the achievable capacities. 

\subsection{Instantly Decodable Network Coding} \label{IDNC}

We assume that all users are interested in receiving/overhearing files out of a set $\mathcal{{F}}$ containing a finite library of $F$ files. These files are deemed popular due to their previous multiple downloads by different subsets of users. These popular files may represent any data format such as pictures, executable instructions, and frames from video-on-demand streaming, and thus a user can start playing the video after some (short) time for buffering, while download goes on. All files in $\mathcal{{F}}=\{f_1,f_2,\ \cdots, \,f_F\}$ are assumed to be stored in the cloud with the same size of $N$ bits so that an XOR encoding (binary operation) of any number of files, called herein encoded file, is also $N$ bits. Furthermore, the cloud keeps a log of all downloaded files by each user. Each RRH holds the whole set of files $\mathcal{{F}}$ that they receive from the cloud controller.

We assume that various RRBs in the same RRH can cooperate by sending different parts of the same uncoded/encoded file, i.e., the fragmentable constraint is allowed in this paper. Therefore, during a single transmission, users can get their requested files from the same RRH by listening to one or various RRBs. From a security perspective, the cloud controller multiplexes users to RRBs in RRHs based on their requests and encodes the communication. As such not possible for other users or attackers to decode the transmission.  

The previous users' downloaded files create an asymmetric side information in the network. Indeed, in each scheduling epoch, the files of $\mathcal{F}$ can be classified for each user $u$ as follows.
\begin{itemize}
\item The \textit{Has} set $\mathcal{H}_{u}$ containing files previously downloaded by the $u$-th user.
\item The \textit{Wants} set $\mathcal {W}_u = \mathcal{F}\backslash \mathcal {H}_u$ containing files requested by the $u$-th user in the current scheduling frame.
\end{itemize}

The cloud controller exploits such side information diversity to transmit encoded files in order to maximize the number of successfully received bits, i.e., throughput, in each scheduling frame. IDNC allows the cloud to generate XOR-encoded files using the source files in $\mathcal{F}$. Let $\tau_{bz}(\kappa_{bz})$ denotes the targeted set of users benefiting from the encoded file $\kappa_{bz}$ transmitted from the $z$-th RRB in the $b$-th RRH, where $\kappa_{bz}$ is an element of the power set $\mathcal{P(F)}$. A combination $\kappa_{bz}$ can be used to extract a new wanted file by any user $u$, i.e., instantly decodable combination, if and only if 
\begin{enumerate}
\item $R_{bz}\leqslant R_{bz}^u$: The user can properly decode the combination.
\item $| \mathcal{W}_u\cap \kappa_{bz}| =1$: The user can XOR the combination $\kappa_{bz}$ with files already downloaded to retrieve a new file.
\end{enumerate}

To illustrate the above mentioned concepts, consider the example in \fref{fig3} which illustrates a CRAN system composed of $3$ users, $2$ RRHs, and $1$ RRB per RRH frame. Each user in this example received $2$ files and missed/wants $1$ file. Assuming that the achievable capacities of all users to the RRHs is $1$ bit/second. Thanks to its coding ability, IDNC is expected to significantly outperform the uncoded schemes. Clearly, one possible solution is that RRH $1$ targets $u_1$ and $u_2$ by sending the file combination $f_1\oplus f_2$, and RRH $2$ targets $u_3$ by sending $f_3$. Thus, the file combination $\kappa_{11}=f_1\oplus f_2$ in RRH $1$ is instantly decodable for users $u_1$ and $u_2$, i.e., $u_1, u_2\in \tau_{11}(\kappa_{11})$, and the uncoded file $\kappa_{21}=f_3$ in RRH $2$ is instantly decodable only for $u_3$, i.e., $u_3\in \tau_{21}(\kappa_{21})$. The optimal achievable overall throughput in this scenario is $3$ bits/s. Indeed, the first RRH targets $u_1$ and $u_2$ with $f_1\oplus f_2$ and the second RRH serves $u_3$ with file $f_3$. Such approach definitely improves upon the $2$ bits/s throughput without coding. 

\section{Problem Formulation and Graph Construction} \label{C-RAN}

This section first formulates the joint scheduling and power adaptation optimization problem in CRANs of interest in this paper. Afterward, the section constructs CRAN-IDNC graph by designing and merging the power control subgraph of each RRB in the system. The presented concepts are illustrated using as an example presented in \fref{fig3}.

\begin{figure}[t]
\centering
\includegraphics[width=0.65\linewidth]{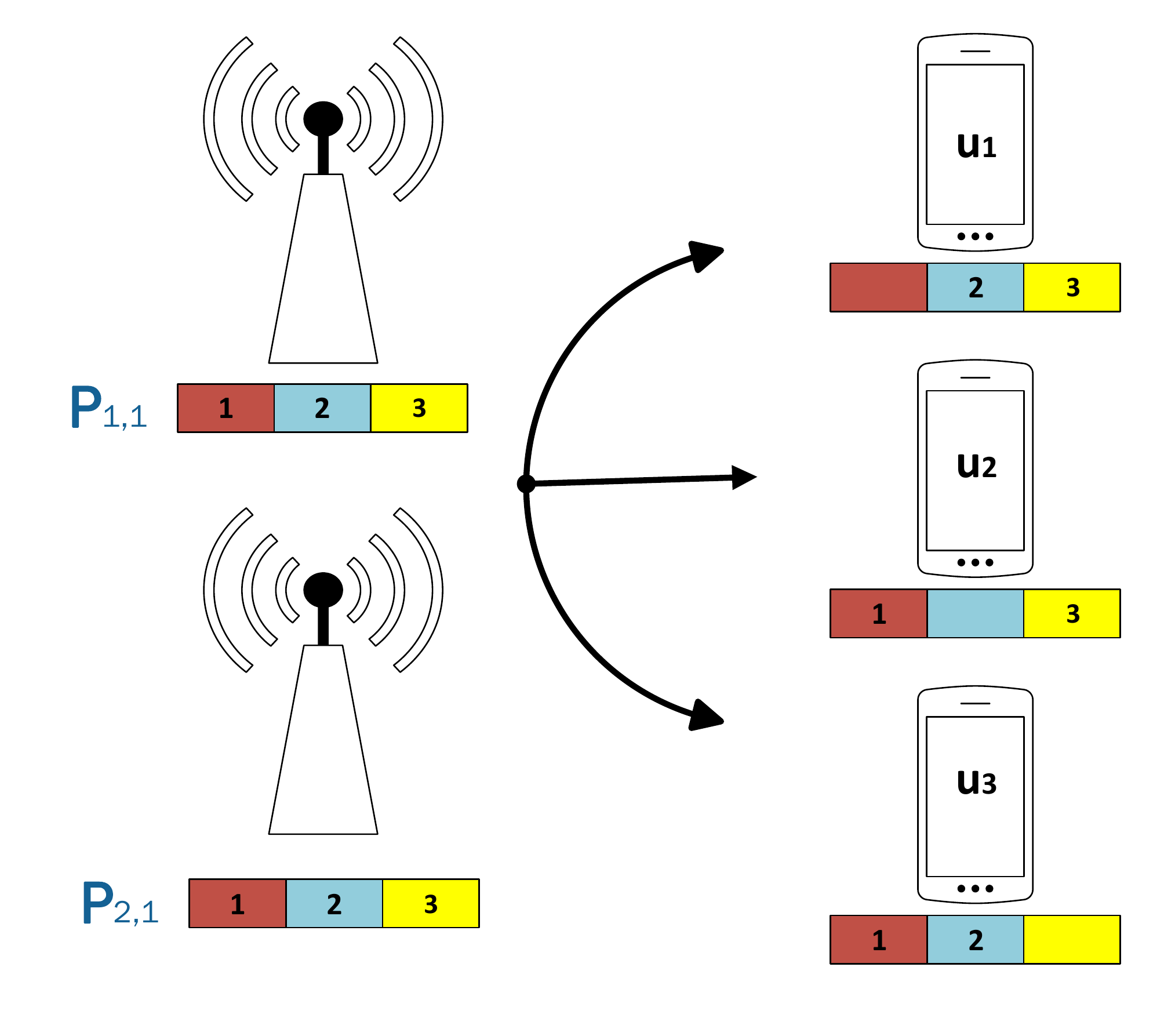}
\caption{A cloud radio access network composed of $3$ users, $3$ files, $2$ remote radio heads, and $1$ RRB in each RRH's transmit frame.}
\label{fig3}
\end{figure}

\subsection{Problem Formulation}

As stated earlier, the paper aims to improve the overall throughput in the aforementioned CRAN setting by assigning users to the RRBs of RRH and adapting the power levels under the following network connectivity constraints (CC):
\begin{itemize}
\item \textbf{CC1}: Each mobile user can connect to at most one RRH, but possibly to many RRBs in that RRH.
\item \textbf{CC2}: Each power level PL is bounded by a nominal maximal value.
\end{itemize}

Let $X_{bz}^u$ be a binary variable that is equal to $1$ if user $u$ is assigned to the $z$-th RRB of the $b$-th RRH, and zero otherwise. Let $Y_{b}^u$ be a binary variable that is set to $1$ if user $u$ is assigned to the $b$-th RRH, and zero otherwise. The joint coordinated scheduling and power allocation problem can be formulated as follows:
\begin{subequations}
\label{Cache_Scheduling_Problem}
\begin{align}
&\max\sum _{b\in \mathcal{\mathcal{B}}} \sum _{z \in \mathcal{\mathcal{Z}}} \sum _{u\in\mathcal{\tau}_{bz}(\kappa_{bz})}X_{bz}^u  \log_{2}(1+\text{SINR}^{u}_{bz}(\textbf{P}))
\label{eq4a} \\
 &{\rm s.t.\ } Y_{b}^u=\mathcal \min\left( \sum _{z}X_{bz}^u,1 \right), (b,u)\in \mathcal{B} \times \mathcal{U}   , \label{eq4b} \\
& \sum _{b}Y_{b}^u\leqslant1, u\in \mathcal{U}, \label{eq4c} \\
& \tau_{bz}(\kappa_{bz}) = \left\{u \in \mathcal{U}\ \bigg|\kappa_{bz} \cap \mathcal{W}_u = 1  \mbox{ \& } R_{bz}\leqslant R^u_{bz}\right\}, \label{eq4d}\\
& 0 \leq P_{bz} \leq P^{\max}_{bz},  (b,z)\in \mathcal{B}\times\mathcal{Z}, \label{eq4e}\\
&  X_{bz}^u , Y_{b}^u\in \{0,1\},{\kappa}_{bz}\in \mathcal{P}(\mathcal{F}),   (u,b,z)\in \mathcal{U}\times \mathcal{B}\times\mathcal{Z}, \label{eq4f}
\end{align}
\end{subequations}
where the optimization is carried over the variables $X_{bz}^u$, $Y_{b}^u$, $\kappa_{bz}$, $R_{bz}$, and $P_{bz}$. The variables $X_{bz}^u$ and $Y_{b}^u$ are discrete optimization parameters that represent the user-RRH and user-RRB associations, respectively. On the other hand, the variables $\kappa_{bz}$, $R_{bz}$ and $P_{bz}$ account for the file combination, the transmission rate, and the PLs for the $z$-th RRB of the $b$-th RRH, respectively. Constraints \eref{eq4b} and \eref{eq4c} translate \textbf{CC1}. Consequently, \eref{eq4d} ensures that all users belonging to these targeted sets $\tau_{bz}(\kappa_{bz})$ $\forall~b\in\mathcal{B}$ and $z\in\mathcal{Z}$ must receive an instantly decodable transmission. Finally, \eref{eq4e} corresponds to constraint \textbf{CC2}. 

The optimization problem \eref{eq4a} is a mixed discrete (user scheduling) and continuous (power allocation) optimization problem. Therefore, computing its global solution may need an extensive search over all possible user-to RRB/RRH associations, and determining the PL for each RRB which is not feasible for any reasonably sized network. Inspired by the work in \cite{18a}, this paper provides an efficient optimal solution to \eref{eq4a} by designing a discrete set of PLs, i.e., replacing the constraint \eref{eq4e} by the constraint $P_{bz}\in\mathbb{P}$, where $\mathbb{P}$ is a set constructed simultaneously with the CRAN-IDNC graph.

In order to construct the CRAN-IDNC graph, the rest of this subsection introduces the used notation and variables which are summarized in \tref{table_1}. Let $\mathcal{M}$ be the set of all possible associations between users and files, i.e., $\mathcal{M}=\mathcal{U}\times \mathcal{F}$. Let $\varphi_u$ and $\varphi_f$ be a family of mappings from the set $\mathcal{M}$ to the set of users $\mathcal{U}$ and the set of files $\mathcal{F}$. In other words, given the element $y=(u,f) \in \mathcal{M}$, the mappings are defined as $\varphi_u(y)=u$ and $\varphi_f(y)=f $, respectively. 

\begin{table}[!t]
\renewcommand{\arraystretch}{0.9}
\caption{Variables and Parameters of the System}
\label{table_1}
\centering
\begin{tabular}{|c|c|}
\hline
Variable & Definition\\
\hline
\hline
$\mathcal{U}$ & Set of $U$ mobile users\\
\hline
$\mathcal{F}$ & Set of $F$ files\\
\hline
$\mathcal{B}$ & Set of $B$ Remote Radio Heads (RRHs)\\
\hline
$\mathcal{Z}$ & Set of $Z$ Radio Resource Blocks (RRBs)\\
\hline
$\mathcal{R}$ & Set of all achievable capacities\\
\hline
$\mathcal{W}_u$ & Set of wanted files by user $u$\\
\hline
$R_{bz}$ & Transmission rate of RRB $z$ in RRH $b$\\
\hline
$\kappa_{bz}$ & The encoded file of RRB $z$ in RRH $b$\\
\hline
$\tau_{bz}(\kappa_{bz})$ & Set of targeted users by  $\kappa_{bz}$\\
\hline
$\mathcal{C}$ & Set of all possible IDNC file combinations\\
\hline
$\mathcal{A}$ & Set of all possible associations\\
\hline
$\mathcal{A}_{z}$ & Set of all possible associations for RRB $z$\\
\hline
\end{tabular}
\end{table}

\subsection{Power Control Subgraph Configuration}\label{RRB}

This subsection first generates all possible IDNC file combinations similar to the one proposed in \cite{S.}. The power control subgraph, then, is configured for each $z$-th RRB in the network. 

The generation of the set of the possible IDNC file combinations $\mathcal{C}$ relies on the fact that the combined associations $s$ in $\mathcal{M}$ can be served by one encoded file. Therefore, two distinct associations $s \in \mathcal{M}$ and $s^\prime\in \mathcal{M}$ are combined if one of the following conditions is satisfied:

\begin{itemize}
\item {\textbf{C1}:} $\varphi_{u}(s)= \varphi_{u}(s^\prime)$ and $\varphi_{f}(s)= \varphi_{f}(s^\prime)$. The two associations are induced by the loss of the same file $\varphi_{f}(s)$ by two distinct users $\varphi_{u}(s)$ and $\varphi_{u}(s^\prime)$.
\item {\textbf{C2}:}  $\varphi_{u}(s) \neq \varphi_{u}(s^\prime)$ and $\varphi_{f}(s)\in \mathcal{H}_{\varphi_u(s^\prime)}$ and $\varphi_{f}(s^\prime) \in \mathcal{H}_{\varphi_u(s^\prime)}$. The wanted file of each association is in the Has set of the user that induced the other association.
\end{itemize}

These conditions ensure that the IDNC file combination in $\mathcal{C}$ is always decodable for the users represented by the associations. Therefore, $\mathcal{C}$ consists of all possible IDNC file combinations, and ($\kappa,\tau$) is an element in it, i.e., $c=(\kappa,\tau$) $~\forall~ c\in \mathcal{C}$, where $\kappa$ and $\tau$ are the combination and the targeted set of users benefiting from that combination, respectively (for more details about these two components see \sref{IDNC}).

Let $\mathcal{A}$ be the set of all possible associations between RRHs, RRBs, IDNC file combinations, and achievable capacities, i.e., $\mathcal{A}=\mathcal{B}\times\mathcal{Z}\times \mathcal{C}\times\mathcal{R}$. Let $\varphi_b$, $\varphi_z$, $\varphi_c$, and $\varphi_r$ be a family of mappings from the set $\mathcal{A}$ to the set of RRHs $\mathcal{B}$, the set of RRBs $\mathcal{Z}$, the set of IDNC file combinations $\mathcal{C}$, and the set of achievable capacities $\mathcal{R}$. In other words, given the element $y=(b,z,c,R) \in \mathcal{A}$, the mappings are defined as $\varphi_b(y)=b$, $\varphi_z(y)=z$, $\varphi_c(y)=c$, i.e., $c=(\kappa,\tau$), and $\varphi_r(y)=R$, respectively. Let $\mathcal{A}_z$ represent the associations relative to the $z$-th RRB across all RRHs, i.e., $y \in \mathcal{A}_{z} \Rightarrow \varphi_z(y)=z$.

Now let the power control subgraph of the $z$-th RRB in the network be denoted by $\mathcal{G}^{z}(\mathcal{V}^{z},\mathcal{E}^{z})$ wherein $\mathcal{V}^{z}$ and $\mathcal{E}^{z}$ refer to the set of vertices and edges of this subgraph, respectively. The set of vertices is generated by merging all possible associations $s \in \mathcal{A}_{z}$ for the different RRHs under the system condition CC1, i.e., each user is associated to at most a single RRH. Therefore, each vertex $v \in \mathcal{V}^{z}$ associated with $\mathbf{S} \in \mathcal{A}_{z}$ satisfies $\varphi_{b}(s) \neq \varphi_{b}(s^\prime)$ and $\varphi_{u}(s) \neq \varphi_{u}(s^\prime)$ for all $s \neq s^\prime \in \mathbf{S}$. As a result, each vertex $v \in \mathcal{V}^{z}$ represents the partial schedule of users to the $z$-th RRB across all connected RRHs. 

The construction of the set of all vertices $\mathcal{E}^{z}$ relies on the fact that the merged \ignore{$\mathbf{S}$} associations can be served simultaneously. Therefore, each $v \in \mathcal{V}^{z}$ representing the association $\mathbf{S}$ satisfies the following conditions: 
\begin{itemize}
\item {\textbf{LC1}:} For all ($s,s^\prime)$ $ \in \mathbf{S}$ such that $\varphi_{b}(s)= \varphi_{b}(s^\prime)$, we have $\varphi_r(s)=\varphi_r(s^\prime)$. This condition guarantees that all associations in the same RRB $z$ and RRH $b$ have the same transmission rate. 
\item {\textbf{LC2}:} For all ($s,s^\prime)$ $ \in \mathbf{S}$ such that $\varphi_{b}(s) \neq \varphi_{b}(s^\prime)$, we have $\tau\cap \tau^\prime=\varnothing$. This condition guarantees that each user is scheduled to at most a single RRH.
\end{itemize}

Intuitively, assuming the power distribution $\textbf{P}$ will be computed later, a vertex $v$ representing the associations $\mathbf{S} \in \mathcal{A}_{z}$ has a weight that reflects the total contribution of the vertex to the network, i.e., the weight of $v$ can be expressed as: 
\begin{equation} 
w(v)=\sum _{s \in \mathbf{S}} \log_{2}(1+\text{SINR}^{\varphi_{u}(s)}_{\varphi_{b}(s)\varphi_{z}(s)}(\textbf{P})).
\label{eq5}
\end{equation}

\ignore{Consider the example presented in Fig. \ref{fig3} of a CRAN
system composed of $3$ users, $3$ files, $2$ RRHs, and $1$ RRB
per RRH frame. Consider the IDNC file combinations $c_1=((f_1\oplus f_2),(u_1,u_2))$ and $c_2=(f_3,u_3)$, $c_1$ and $c_2 \in \mathcal{C}$, which are belong to two possible associations $s$ and $s^\prime$ $\in \mathcal{A}_z$, i.e., $s=\{11c_1R\}$ and $s^\prime=\{21c_2R\}$. Clearly  $s$ and $s^\prime$ formed $\mathbf{S}=\{21c_1R,11c_2R\}$ as they satisfy CC1. Thus, $\mathbf{S}$ is represented by a vertex $v$ in the power control subgraph. By solving the power optimization problem for this vertex as explained in the next section, $v$ associated with $\mathbf{S}$ becomes now $v=\{1111r^*_{11}p^*_{11},1122r^*_{11}p^*_{11}, 2133r^*_{21}p^*_{21}\}$, and its weight is calculated by \eref{eq5}, i.e., $w(v)=r^*_{11}+r^*_{11}+r^*_{21}$.   }

\subsection{CRAN-IDNC Graph Design} \label{CRAN}

In \cite{598452}, the authors introduce the MB-RA-IDNC graph as a design to represent all possible file combinations, transmission rates, and users that can instantly decode the transmission for a multi base-station setting. This subsection extends the formulation to the more practical and promising paradigm of CRAN of interest in this paper. The study is further extended to include power optimization by exploiting the local power allocation graphs in a similar fashion as in \cite{18a}.

The CRAN-IDNC graph, denoted by $\mathcal{G}(\mathcal{V},\mathcal{E})$, is constructed by first generating all the $Z$ power control subgraphs. The vertex set of the CRAN-IDNC graph is simply the union of vertices of all the power control subgraphs, i.e., $\mathcal{V} = \bigcup\limits_{z \in \mathcal{Z}}\mathcal{V}^{z}$. Whereas edges between vertices within the same power control subgraph are already described in the previous subsection, the rest of this section describes remaining edges corresponding to different RRBs.

Following similar philosophy as before, two different vertices belonging to two different subgraphs are adjacent if their combination results in a feasible schedule. In particular, two vertices are connected if no user is scheduled to different RRHs. To mathematically formulate the above constraint, let vertex $v \in \mathcal{G}^{z}$ be corresponding to the association $\mathbf{S}$ and vertex $v^{\prime}\in \mathcal{G}^{z^\prime}$ corresponding to the association $\mathbf{S}^{\prime}$. Vertices $v$ and $v^{\prime}$ are adjacent if the associations they represent satisfy the following general condition (GC): 
\begin{itemize}
\item {\textbf{GC}:} For all $(s,s^{\prime})\in \mathbf{S} \times \mathbf{S}^{\prime}$ such that $\tau\cap \tau^\prime\neq\varnothing$ \ignore{$\varphi_u(s)=\varphi_u(s^{\prime})$}, we have $\varphi_b(s)=\varphi_b(s^{\prime})$. The condition insists that the same user can be scheduled only to a unique RRH.
\end{itemize}

Given the CRAN-IDNC graph $\mathcal{G}(\mathcal{V},\mathcal{E})$ as constructed above, it can be established that any maximal clique in the graph represents a set of coded transmissions that satisfies the following criterion:
\begin{itemize}
\item All users, having vertices in the maximal clique, can decode a new file from the transmission schedule of all RRBs and RRHs.
\item Each user is scheduled to a single RRH.
\item Each RRB identified by the vertices in a maximal clique adopt the transmission rate identified by the vertex. Such rate represents the smallest channel capacity of all users served by that RRB. 
\end{itemize}

Given the optimal power allocation $\textbf{P}$, the following theorem reformulates the joint coordinated scheduling and power allocation problem of interest in this paper.
\begin{theorem} \label{th:1w}
The CRAN coordinated scheduling and power allocation problem is equivalent to a maximum-weight clique problem over the CRAN-IDNC graph, and can be written as: 
\begin{align} \label{eq6}
\begin {split}
\arg \max_ {{\substack{\mathbf{C}\in \mathtt{C}} }}\sum\limits _{v\in \mathbf{C}}w(v),
\end{split}
\end{align}
where $\mathbf{C}$ is a maximal clique in the CRAN-IDNC graph, $\mathtt{C}$ is the set of all possible maximal cliques of degree $Z$, and the weight of a vertex $v \in \mathcal{V}$ is defined in \eref{eq5}. The set of targeted users and the file combination of the $z$-th RRB across all RRHs is obtained by combining the vertices of the maximum clique corresponding to the power control subgraph $\mathcal{G}^{z}$.
\end{theorem}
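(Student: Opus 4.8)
The plan is to prove \thref{th:1w} by exhibiting a weight-preserving bijection between the feasible solutions of the joint scheduling and power problem \eref{Cache_Scheduling_Problem} and the maximal cliques of degree $Z$ in the CRAN-IDNC graph $\mathcal{G}(\mathcal{V},\mathcal{E})$. Once such a correspondence is established and shown to preserve the objective, maximizing the throughput \eref{eq4a} becomes identical to solving the maximum-weight clique problem \eref{eq6}. Throughout, the power allocation $\mathbf{P}$ is taken as given, so that every vertex weight $w(v)$ in \eref{eq5} is a fixed number; I will return to the power variable at the end.

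First I would argue the forward direction, mapping a feasible schedule to a clique. Given any feasible point of \eref{Cache_Scheduling_Problem}, for each RRB index $z$ I collect the associations of all RRHs that serve some user on that RRB into a single set $\mathbf{S}_z$. The IDNC feasibility constraint \eref{eq4d} guarantees that each transmitted pair $(\kappa_{bz},\tau_{bz})$ is a legitimate element of $\mathcal{C}$, i.e.\ it satisfies \textbf{C1}/\textbf{C2} and hence is instantly decodable. The rate consistency within a cell and the single-RRH restriction per RRB (a consequence of \eref{eq4b}--\eref{eq4c}) force $\mathbf{S}_z$ to obey \textbf{LC1} and \textbf{LC2}, so $\mathbf{S}_z$ is a bona fide vertex $v_z\in\mathcal{V}^{z}$. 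The global association constraint \eref{eq4c}, $\sum_b Y^u_b\le 1$, ensures that no user is served by two distinct RRHs across different RRBs, which is exactly condition \textbf{GC}; therefore every pair $v_z,v_{z'}$ with $z\neq z'$ is adjacent. Since the subgraphs $\mathcal{G}^{z}$ carry no internal edges, the set $\{v_1,\dots,v_Z\}$ contains one vertex from each subgraph and is a clique of degree $Z$.

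Next I would establish the reverse map and the objective identity. Starting from a degree-$Z$ maximal clique $\mathbf{C}=\{v_1,\dots,v_Z\}$, the absence of intra-subgraph edges means the clique picks exactly one vertex per RRB; reading off the associations of each $v_z$ recovers the user--RRB--RRH assignments, the file combinations $\kappa_{bz}$, and the rates $R_{bz}$. Conditions \textbf{LC1}--\textbf{LC2} reconstruct \eref{eq4b} and the per-RRB part of \textbf{CC1}, while condition \textbf{GC} reconstructs \eref{eq4c}; membership in $\mathcal{C}$ reinstates the decodability constraint \eref{eq4d}. Hence the clique yields a feasible solution of \eref{Cache_Scheduling_Problem}, and the two maps are mutually inverse. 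The objective match is immediate from \eref{eq5}: summing $w(v_z)$ over the clique reproduces exactly the throughput \eref{eq4a}, since each $\log_{2}(1+\text{SINR})$ term of a served user appears once in the corresponding vertex weight. Finally, because RRB orthogonality makes the interference at index $z$ depend only on the powers of that same index across RRHs, the power optimization decouples over RRBs and can be absorbed into each vertex, justifying the ``given $\mathbf{P}$'' hypothesis and the per-vertex weight.

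The main obstacle I anticipate is the careful, two-directional verification that the graph's adjacency rules capture the feasibility constraints \emph{exactly}. In particular, the single-RRH requirement of \textbf{CC1} is split between the within-vertex merging rule \textbf{LC2} (disjoint targeted sets of RRHs sharing an RRB) and the across-vertex rule \textbf{GC} (a user confined to one RRH across RRBs); one must check that these two conditions \emph{jointly} reproduce \eref{eq4c} with neither slack nor over-restriction. Equally delicate is showing that a maximal clique of degree $Z$ corresponds precisely to a complete schedule over all $Z$ RRBs: an empty-association (idle) vertex must be included in each $\mathcal{V}^{z}$ so that every feasible schedule, including those leaving some RRBs unused, still maps to a full degree-$Z$ clique, keeping the bijection total.
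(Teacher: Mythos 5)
Your proposal is correct and follows essentially the same route the paper intends: the paper omits its own proof, deferring to Theorem~2 of \cite{18a}, whose argument is precisely the weight-preserving one-to-one correspondence between feasible schedules and maximal cliques that you construct (and which the paper itself sketches later for the fixed-power case in Theorem~2 of \sref{COOS}). Your added observations --- that an idle vertex per subgraph is needed so that partial schedules still yield degree-$Z$ cliques, and that RRB orthogonality is what lets the power variable be absorbed into per-vertex weights --- are exactly the adaptations required to lift that reference's argument to the multiplexed IDNC setting.
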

\begin{proof}
The proof of this theorem is omitted in this paper as it can follow the same steps used in investigating Theorem 2 in \cite{18a} but with the difference that it is applied for multiplexing users in each RRB/RRH instead of assigning one user in each RRB.
\end{proof}

\section{Joint Scheduling and Power Allocation Solution using IDNC} \label{JSP}

This section proposes a solution to the joint coordinated scheduling and power optimization problem in \eref{eq4a}. The philosophy of solution relies on solving the power control optimization problem for each vertex of the power control graph which would allow the construction of the aforementioned CRAN-IDNC graph. Afterward, using the result of \thref{th:1w}, this section shows that the optimal throughput can be reached by investigating the maximum-weight clique in the CRAN-IDNC graph, which can be solved optimally and efficiently using low-complexity graph-theoretic algorithms that available in the literature, e.g., \cite{2155446,17,16513519}.

\subsection{Optimal Scheduling and Power Control Solution}\label{OPA}

This subsection provides an efficient method for constructing a discrete set of power levels such that the optimal solution of \eref{eq4a} is reached. More specifically, we show that by simultaneously computing the optimal power levels $\textbf{P}$ while generating the CRAN-IDNC graph, we can achieve the optimal solution of the joint coordinated scheduling and power optimization problem.

Consider the $z$-th local power control graph and a vertex $v \in \mathcal{V}^z$ in that graph associated with the associations $\mathbf{S}=\{s_1,s_2,\ \cdots, \, s_{\mathtt{S}}\} \in \mathcal{A}_z$, where $\mathtt{S}$ is the degree of $\mathbf{S}$, i.e., $\mathtt{S}=|\mathbf{S}|$, and $\mathbf{S}$ represents the total targeted users of $\sum\limits_{b\in B}|\tau_{bz}(\kappa_{bz})|$. The optimal power levels PLs ($p_{1z}, \ \cdots, \, p_{Bz}$) that maximize the received throughput for that particular vertex $v$ are the solution to the following optimization problem:
\begin{align}
\label{eq7} 
&\max_{P_{bz}}\sum _{b\in \mathcal{B}} |\tau_{bz}(\kappa_{bz})| * \min_{u \in \mathcal{\tau}_{bz}(\kappa_{bz})} \log_{2}(1+\text{SINR}^{u}_{bz}(\textbf{P}))
\nonumber \\
&{\rm s.t.\ } 0\leqslant P_{bz}\leqslant P^{\max}_{bz},~\forall~ b\in \mathcal{B}, 
\end{align}
where the optimization is over the power levels $p_{bz}$, $\forall ~b\in \mathcal{B}$. 

As stated earlier, the power level $p_{bz}$ of $z$-th RRB in the $b$-th RRH depends not only on the corresponding power levels $p_{b'z}$ and on the scheduled users in these RRB. The power optimization problem \eref{eq7} is a well-known non-convex
problem \cite{38}. Despite the non-convexity of the problem, it can be solved efficiently using one of the efficient algorithms (e.g., \cite{38,39}). Our proposed solution can use any of these power optimization algorithms to solve problem \eref{eq7}. 

\ignore{For illustration, in the example of Fig. \ref{fig3}, assuming users $1$ and $2$ assigned to RRH $1$ and user $3$ assigned to RRH $2$, i.e., formed a feasible schedule $\mathbf{S}=\{1111, 1122, 2133\}$. From the fact that RRH $1$ uses the minimum achievable capacity of either user $1$ or user $2$ as a transmission rate, the power optimization problem can be solved for each user-RRB/RRH assignment, not for all users assigned to each RRB/RRH. Therefore, the power optimization problem is solved over $(1111, 2133)$ and $(1122, 2133)$, individually. }  

\subsection{Low-Complexity Greedy Algorithm}\label{Heu}
It is well established that the maximum-weight clique problem is the one of finding the clique with the maximum-weight, which is an NP-complete problem, and even its approximation is hard \cite{16}. However, it can be optimally solved with reduced complexity as compared to the naive search, e.g., the optimal algorithms in \cite{17}, \cite{16513519}\ignore{,\cite{20}}. In this work, we propose a method similar to the one in \cite{18a} to achieve the optimum of the joint coordinated scheduling and power problem \eref{eq4a} for one particular transmission.

The joint coordinated scheduling and power optimization problem can be solved by first constructing the CRAN-IDNC graph as follows. For each RRB $z\in\mathcal{Z}$, a power control subgraph $\mathcal{G}^{z}$ is generated using \textbf{LC1} and \textbf{LC2}. Afterwards, for each RRB $z\in\mathcal{Z}$ across all RRHs, a vertex $v\in\mathcal{G}^{z}$ corresponding to the feasible schedule $\mathbf{S}\in\mathcal{A}_z$ is generated for all possible associations. The optimal PLs of each association are, then, calculated by solving the optimization problem \eref{eq7}. The vertex in the power control subgraph $\mathcal{G}^{z}$ is generated by appending the computed PLs and the corresponding rates to that vertex as shown in \sref{OPA}. The same steps above are repeated for all RRBs $z\in\mathcal{Z}$. The CRAN-IDNC graph is, then, designed by merging all subgraphs and adding connections according to \textbf{GC}. The optimal solution to the joint coordinated scheduling and power optimization problem is found by solving the maximum-weight clique problem in CRAN-IDNC graph in which each iteration of finding the maximum-weight clique is implemented as follows. The algorithm computes the weight using \eref{eq5}, then the vertex with the maximum-weight $v^*$ is selected and added to $\mathbf{C}$, i.e., $\mathbf{C}$ is initially empty. The graph $\mathcal{G}$ is, then, updated by eliminating the selected vertex $v^*$ and all the vertices that are not adjacent to it so that to guarantee that the next selected vertex is not in feasible transmission conflict with the already selected ones in $\mathbf{C}$. The process continues until no
more vertices exist in $\mathcal{G}$. Clearly, the number of vertices in the selected maximum-weight clique is $Z$. The detailed procedures of the algorithm are provided in Algorithm \ref{Alg1}.
\begin{algorithm}[t]
\caption{Joint Coordinated Scheduling and Power Allocation Algorithm}
\label{Alg1}
\begin{algorithmic}[]
\REQUIRE $\mathcal{U}, \mathcal{F}, \mathcal{B}, \mathcal{Z}, \mathcal{H}_u, \mathcal{W}_u$ and $~h^u_{bz}, (u,f)\in\mathcal{U}\times\mathcal{F}$, $(b,z,c)\in\mathcal{B}\times\mathcal{Z}\times\mathcal{C}$.
\STATE Initialization: maximum-weight clique $\mathbf{C}\ = \phi$.
\FOR {$z\in\mathcal{Z}$ \do }
\STATE Initialization: $\mathcal{G}^z\ = \phi$.
\FORALL{$\mathbf{S}=\{s_1,s_2,\ \cdots, \, s_{\mathtt{S}}\}\in \mathcal{A}_z$ \do }
\STATE Solve \eref{eq8a} to compute the optimal power allocations\\ $\textbf{P}=\{(p^*_{1z},p^*_{2z},\ \cdots, \, p^*_{Bz})\}$
\STATE Create $v=\{ \{(s_1,r^*_{1z},p^*_{1z}),\cdots, \, (s_{|\mathcal{\tau}_{1z}(\kappa_{1z})|},r^*_{1z},p^*_{1z})\}$\\ $,\ \cdots, \, \{(s_{B},r^*_{Bz},p^*_{Bz}), \cdots, \,(s_{|\mathcal{\tau}_{Bz}(\kappa_{Bz})|},r^*_{Bz},p^*_{Bz})\} \}$. 
\STATE Calculate $w(v)$ using \eref{eq5}.
\STATE Set $\mathcal{G}^z= \mathcal{G}^z\cup \{v\}$. 
\ENDFOR 
\ENDFOR 
\STATE Set $\mathcal{G}= \bigcup\limits_{z\in{\mathcal{Z}}}\mathcal{G}^z$.
\STATE Connect vertices of $\mathcal{G}$ using \textbf{GC}.
\STATE Solve maximum-weight clique problem over $\mathcal{G}$. 
\STATE Output $\mathbf{C}$
\end{algorithmic}
\end{algorithm}

\begin{figure}[t]
\centering
\includegraphics[width=0.8\linewidth]{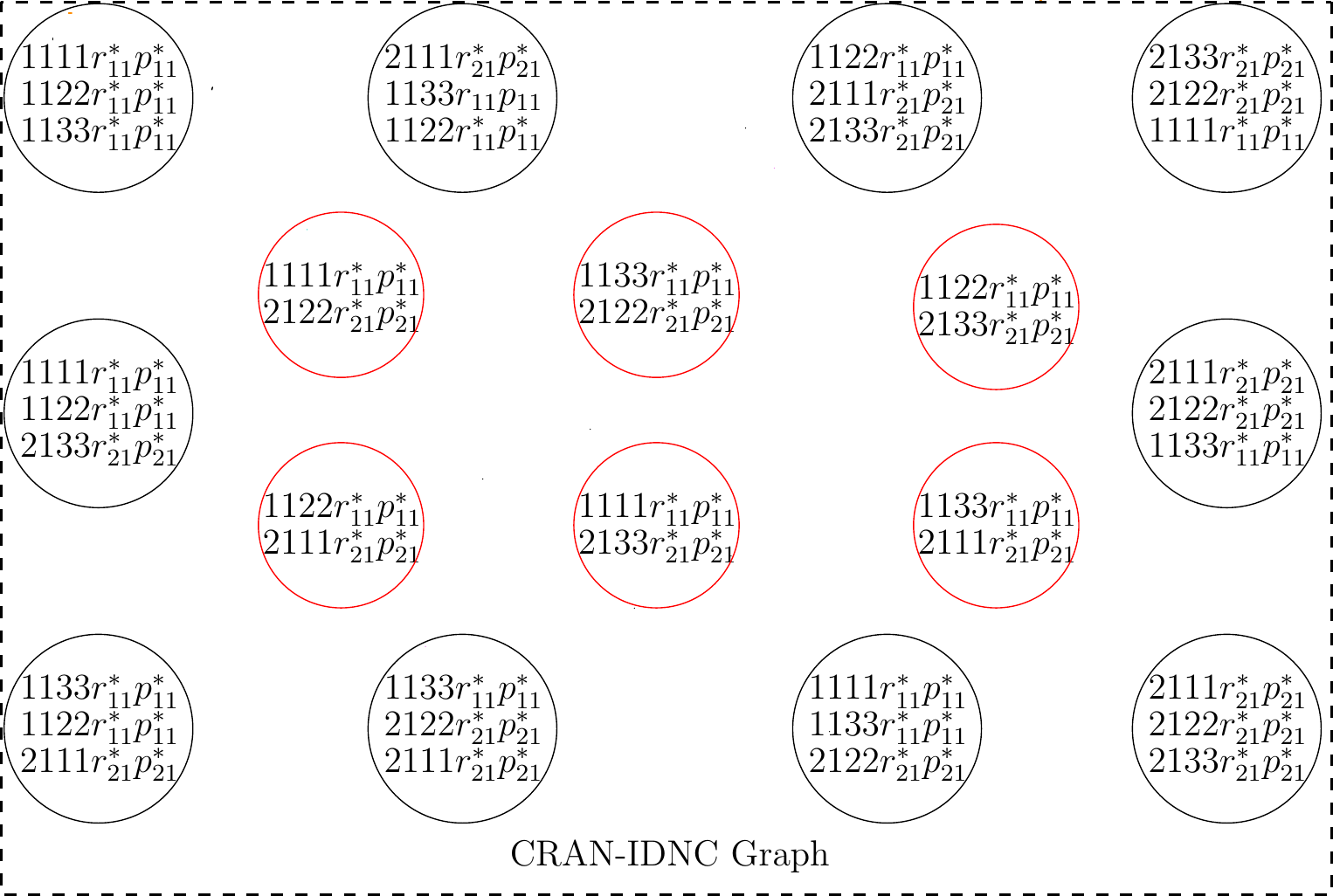}
\caption{CRAN-IDNC Graph of the example presented in \fref{fig3} based on Algorithm \ref{Alg1}.}
\label{fig4}
\end{figure}

\subsection{Motivating Example} 

Through many steps, this subsection illustrates how to use Algorithm~\ref{Alg1} to construct the CRAN-IDNC graph shown in \fref{fig4} of the example presented in \fref{fig3}. 

\textit{First step:}
In this step, we first generate the set of all possible associations $\mathcal{M}=\{u_1f_1,u_2f_2,u_3f_3\}$. Then, based on the instant decodability conditions, i.e., \textbf{C1} and \textbf{C2} that explained in \sref{RRB}, we generate all IDNC file combinations $\mathcal{C}$. \tref{table_2} summarizes all possible IDNC combinations.

\begin{table}[!t]
\renewcommand{\arraystretch}{0.9}
\caption{All Possible File Combinations $\mathcal{C}$ }
\label{table_2}
\centering
\begin{tabular}{|c|c|}
\hline
i & $c_i$($\kappa,\tau$)\\
\hline
\hline
$1$ & $((f_1\oplus f_2),(u_1,u_2))$\\
\hline
$2$ &  $((f_1),(u_1))$\\
\hline
$3$ & $((f_1\oplus f_3),(u_1,u_3))$\\
\hline
$4$ & $((f_2),(u_2))$\\
\hline
$5$ & $((f_2\oplus f_3),(u_2,u_3))$\\
\hline
$6$ & $((f_3),(u_3))$\\
\hline
$7$ & $((f_1\oplus f_2\oplus f_3),(u_1,u_2,u_3))$\\
\hline
\end{tabular}
\end{table}   

\begin{table}[!t]
\renewcommand{\arraystretch}{0.9}
\caption{All feasible Schedules $\mathbf{S}$}
\label{table_3}
\centering
\begin{tabular}{|c|c||c|c|}
        \hline
        i & $\mathbf{S}_i$ & i & $\mathbf{S}_i$\\
        \hline
        \hline
        $1$ & $\{11c_1R,21c_6R\}$ & $9$ & $\{21c_1R,11c_6R\}$\\ 
        \hline
        $2$ &  $\{11c_5R,21c_2R\}$ & $10$ & $\{21c_7R\}$\\
        \hline
        $3$ & $\{11c_7R\}$ & $11$ & $\{11c_2R,21c_4R\}$\\
        \hline
        $4$ & $\{21c_2R,11c_6R\}$ & $12$ & $\{11c_6R,21c_4R\}$\\
        \hline
        $5$ & $\{11c_6R,21c_1R\}$ & $13$ & $\{11c_4R,21c_2R\}$\\
        \hline
        $6$ & $\{11c_4R,21c_3R\}$ & $14$ & $\{21c_6R,11c_2R\}$\\
        \hline
        $7$ & $\{11c_3R,21c_4R\}$ & $15$ & $\{21c_6R,11c_4R\}$\\
        \hline
        $8$ & $\{21c_5R,11c_6R\}$ & $16$ & $\{11c_6R, 21c_2R\}$\\
        \hline
\end{tabular}
\end{table}   

\begin{table}[!t]
\renewcommand{\arraystretch}{0.95}
\caption{The represented vertices and their corresponding weights}
\label{table_4}
\centering
%\begin{tabular}{|c|c|c|}
\begin{tabular}{|p{0.2cm}| p{5.4cm}|p{1.9cm}|}
\hline
i & $v_i$ &  $w_i$\\
\hline
\hline
$1$ & $\{1111r^*_{11}p^*_{11},1122r^*_{11}p^*_{11},2133r^*_{21}p^*_{21}\}$ & $2*r^*_{11}+r^*_{21}$ \\ 
\hline
$2$ &  $\{1122r^*_{11}p^*_{11},1133r^*_{11}p^*_{11},2111r^*_{21}p^*_{21}\}$ & $2*r^*_{11}+r^*_{21}$  \\
\hline
$3$ & $\{1111r^*_{11}p^*_{11},1122r^*_{11}p^*_{11},1133r^*_{11}p^*_{11}\}$ & $3*r^*_{11}$  \\
\hline
$4$ & $\{2111r^*_{21}p^*_{21},1122r^*_{11}p^*_{11},1133r^*_{11}p^*_{11}\}$ & $r^*_{21}+2*r^*_{11}$\\
\hline
$5$ & $\{1133r^*_{11}p^*_{11},2122r^*_{21}p^*_{21},2111r^*_{21}p^*_{21}\}$ & $r^*_{11}+2*r^*_{21}$ \\
\hline
$6$ & $\{1122r^*_{11}p^*_{11},2111r^*_{21}p^*_{21},2133r^*_{21}p^*_{21}\}$ & $r^*_{11}+2*r^*_{21}$ \\
\hline
$7$ & $\{1111r^*_{11}p^*_{11},1133r^*_{11}p^*_{11},2122r^*_{21}p^*_{21}\}$ & $2*r^*_{11}+r^*_{21}$ \\
\hline
$8$ & $\{2122r^*_{21}p^*_{21},2133r^*_{21}p^*_{21},1111r^*_{11}p^*_{11}\}$ & $2*r^*_{21}+r^*_{11}$ \\
\hline
$9$ & $\{2111r^*_{21}p^*_{21},2122r^*_{21}p^*_{21},1133r^*_{11}p^*_{11}\}$ & $2*r^*_{21}+r^*_{11}$\\ 
\hline
$10$ & $\{2111r^*_{21}p^*_{21},2122r^*_{21}p^*_{21},2133r^*_{21}p^*_{21}\}$ & $3*r^*_{21}$\\
\hline
$11$ & $\{1111r^*_{11}p^*_{11},2122r^*_{21}p^*_{21}\}$ & $r^*_{11}+r^*_{21}$ \\
\hline
$12$ & $\{1133r^*_{11}p^*_{11},2122r^*_{21}p^*_{21}\}$ & $r^*_{11}+r^*_{21}$\\
\hline
$13$ & $\{1122r^*_{11}p^*_{11},2111r^*_{21}p^*_{21}\}$ & $r^*_{11}+r^*_{21}$\\
\hline
$14$ & $\{1111r^*_{11}p^*_{11},2133r^*_{21}p^*_{21}\}$ & $r^*_{11}+r^*_{21}$\\
\hline
$15$ & $\{1122r^*_{11}p^*_{11},2133r^*_{21}p^*_{21}\}$ & $r^*_{11}+r^*_{21}$\\
\hline
$16$ & $\{1133r^*_{11}p^*_{11},2111r^*_{21}p^*_{21}\}$ & $r^*_{11}+r^*_{21}$ \\
\hline
\end{tabular}
\end{table}  

\textit{Second step:}
This step generates all feasible schedules $\mathbf{S}\in \mathcal{A}_z$ such that each schedule consists of many associations $s$, in which they satisfy \textbf{CC1}, i.e., each user in each association is scheduled to one RRH. Thus, each feasible schedule $\mathbf{S}_i$ is represented by a vertex $v_i$ in the power control subgraphs based on \textbf{LC1} and \textbf{LC2} that are described in the \sref{RRB}, $\forall ~i\in |\mathbf{S}|$.  \tref{table_3} summarizes all these feasible schedules $\mathbf{S}$. In \tref{table_3}, each feasible schedule contains a set of associations each one labeled by $bzcR$, where $b$, $z$, $c$, and $R$ represent the indices of RRHs, RRBs, IDNC file combinations, and rates, respectively.

\textit{Third step:}
Given all feasible schedules $\mathbf{S}$, we solve the power optimization problem for each one, i.e., for each vertex $v$ as explained in \sref{OPA}. Having the optimal power distribution $\textbf{P}$, a vertex $v$ has a weight $w(v)$ defined in \eqref{eq5} that reflects the total contribution of the vertex to the network. Theses vertices and their weights are illustrated in \tref{table_4}.
In \tref{table_4}, it is clear that each vertex contains a set of associations each one labeled by $bzufrp$, where $b$, $z$, $u$, $f$, $r$, and $p$ represent the indices of RRHs, RRBs, users, files, achievable capacities, and PLs, respectively. 

\textit{Fourth step:}
Having all the constructed vertices and their calculation weights that are explained in the third step, we run the maximum-weight clique algorithm to find the best clique that gives the optimal solution.
Indeed, if the optimal solution that maximizes the received throughput is to schedule a single user to each RRB/RRH, then the proposed algorithm would return this schedule. Clearly, this example shows that the uncoded schemes (located in the middle of the graph with red color) are particular solutions of the proposed CRAN-IDNC algorithm, and are represented as feasible schedules by maximal cliques in the CRAN-IDNC graph. 

\subsection{Complexity Analysis of the Heuristic Solution}

It can be noted from the previous subsection that using graph theory techniques the joint scheduling and power allocation optimization problem in CRAN is equivalent to a maximum-weight clique, which is an NP-hard problem. Thus, the cloud needs, in each transmission step, to build the CRAN-IDNC graph with vertices from all power control subgraphs, and to determine its maximum-weight clique. The following lemma analysis the computational complexity of reaching the optimal solution of \eref{eq4a} using Algorithm \ref{Alg1}. 

\begin{lemma} \label{lemma1}
The complexity of finding the optimal solution of \eref{eq4a} based on Algorithm \ref{Alg1} is $\mathcal{O}(\Perm{\mathcal{C}}{B}Z(c(A)+\Perm{\mathcal{C}}{B}Z+1)+ Z)$, where $\Perm{n}{k}=\frac{n!}{(n-k)!}$ is the number of permutations, and $c(x)$ is the computational complexity of solving the power allocation problem \eref{eq7} with $x$ variables.
\end{lemma}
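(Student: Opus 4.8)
The plan is to bound the running time of Algorithm~\ref{Alg1} line by line, grouping the work into three phases: (i) construction of the $Z$ power control subgraphs together with the per-vertex power solves, (ii) insertion of the inter-subgraph edges via \textbf{GC}, and (iii) extraction of the maximum-weight clique. The key quantity to pin down first is the number of vertices in a single power control subgraph $\mathcal{G}^z$, since it drives every subsequent cost.

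First I would count the vertices of $\mathcal{G}^z$. By construction, a vertex corresponds to a feasible schedule $\mathbf{S}\in\mathcal{A}_z$, which assigns to each of the $B$ RRHs at most one IDNC file combination drawn from $\mathcal{C}$, subject to \textbf{LC1}--\textbf{LC2}. The worst case arises when every RRH receives a distinct combination, and the number of such ordered assignments of $B$ combinations out of $|\mathcal{C}|$ to the $B$ distinguishable RRHs is exactly $\Perm{\mathcal{C}}{B}$. Since \textbf{LC1}--\textbf{LC2} can only prune candidates, $|\mathcal{V}^z|\leqslant\Perm{\mathcal{C}}{B}$, and enumerating them costs $O(\Perm{\mathcal{C}}{B})$. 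For each enumerated schedule the algorithm solves the power problem \eref{eq7} at cost $c(A)$ and then evaluates the weight \eref{eq5} at cost $O(1)$ (a sum over at most $B$ terms). Looping over all $Z$ RRBs and adding the $O(Z)$ cost of the subgraph initializations, Phase~(i) therefore costs $O\big(\Perm{\mathcal{C}}{B}\,Z\,(c(A)+1)+Z\big)$ and produces $|\mathcal{V}|=\sum_{z}|\mathcal{V}^z|\leqslant\Perm{\mathcal{C}}{B}\,Z$ vertices. Connecting the graph by \textbf{GC} then requires testing every pair of vertices once, i.e.\ $O\big((\Perm{\mathcal{C}}{B}\,Z)^2\big)$ constant-cost checks, while the greedy clique search selects the heaviest remaining vertex, appends it to $\mathbf{C}$, and deletes its non-neighbors; since each clique has exactly $Z$ vertices this loop runs $Z$ times and contributes at most $O(\Perm{\mathcal{C}}{B}\,Z^2)$, which is absorbed into the edge term. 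Summing the three phases yields $O\big(\Perm{\mathcal{C}}{B}\,Z\,(c(A)+\Perm{\mathcal{C}}{B}\,Z+1)+Z\big)$, the claimed bound, after factoring $\Perm{\mathcal{C}}{B}\,Z$ out of the power-solve, edge-insertion, and weight-evaluation terms.

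I expect the main obstacle to be rigorously justifying the vertex count $\Perm{\mathcal{C}}{B}$ rather than any arithmetic in the summation. The subtlety is that $\mathcal{A}_z$ is defined as the product $\mathcal{B}\times\mathcal{Z}\times\mathcal{C}\times\mathcal{R}$ over \emph{individual} associations, so one must argue that a \emph{merged} feasible schedule is equivalent to an ordered choice of at most one combination per RRH, and that the rate coordinate $R$ introduces no extra multiplicative factor once \textbf{LC1} forces a common rate per RRH fixed by the served users. Handling the ``at most one'' case (idle RRHs) and the distinctness forced by \textbf{LC2} cleanly, so that the count is a genuine upper bound rather than the unconstrained product $|\mathcal{C}|^{B}$, is where the care lies; everything downstream is a routine multiplication and collection of terms.
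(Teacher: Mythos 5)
Your proposal is correct and follows essentially the same route as the paper's proof: the same decomposition into per-subgraph vertex enumeration with $\Perm{\mathcal{C}}{B}$ vertices and $c(A)$-cost power solves, pairwise \textbf{GC} checks costing $(\Perm{\mathcal{C}}{B}Z)^2$, and a $Z$-iteration greedy clique search, collected into the same final bound. The only (harmless) difference is bookkeeping: you charge the clique search $\mathcal{O}(\Perm{\mathcal{C}}{B}Z^2)$ and absorb it into the edge term, whereas the paper charges it $\mathcal{O}(\Perm{\mathcal{C}}{B}Z+Z)$; and your closing caveat about rigorously justifying the $\Perm{\mathcal{C}}{B}$ vertex count is a subtlety the paper itself does not address.
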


\begin{proof}
The complexity of the heuristic solution using Algorithm \ref{Alg1} can be decomposed in the complexity of constructing the CRAN-IDNC graph and the complexity of solving the maximum-weight clique over that graph.

We can construct the CRAN-IDNC graph by first looking at the complexity of generating a single subgraph, i.e., power control subgraph. It can be noted that the number of vertices in each subgraph is the total number of possible associations $|\mathbf{S}|$, and the number of associations relies on all possible IDNC file combinations $\mathcal{C}$. Since each subgraph is created for each $z$-th RRB across $B$ RRHs, the total number of the associations $\mathbf{S}$ in each subgraph is $\Perm{\mathcal{C}}{B}$. From \sref{RRB} it is clear that each association is represented by a vertex $v$, then, the complexity of generating the total number of vertices in each subgraph $\mathcal{O}(\Perm{\mathcal{C}}{B})$. From \sref{OPA}, we run the power optimization \eref{eq7} $\Perm{\mathcal{C}}{B}$ times to calculate the optimal power allocations for each vertex in the power control subgraph. Hence, the complexity of constructing a subgraph and solving the power optimization is $\mathcal{O}(\Perm{\mathcal{C}}{B}c(A))$. 

Before computing the maximum-weight complexity, we need first to build the adjacency matrix of the whole (CRAN-IDNC) graph. In other words, we need to check the \textbf{General Condition} \textbf{(GC)} of each pair of the total number of vertices in CRAN-IDNC graph $\mathcal{O}(\Perm{\mathcal{C}}{B}Z)$ to determine whether they should be connected with an edge, thus needing $\mathcal{O}(\Perm{\mathcal{C}}{B}Z)^2$. The complexity of the maximum-weight clique algorithm can be decomposed in sum weights and vertex search computations as follows. Each iteration needs $\mathcal{O}(\Perm{\mathcal{C}}{B}Z)$ operations for weight calculations of its maximum-weight clique. Note that each maximum-weight clique has at most $Z$ vertices as each subgraph can contribute at most with one vertex per transmission. Indeed, the complexity of the algorithm for identifying the maximum-weight clique and their sum weights requires at most $\mathcal{O}(\Perm{\mathcal{C}}{B}Z+ Z)$. Given the above configurations on the complexities of the different algorithm components, the total complexity of Algorithm \ref{Alg1} for each transmission is $\mathcal{O}(\Perm{\mathcal{C}}{B}Zc(A)+(\Perm{\mathcal{C}}{B}Z)^2+\Perm{\mathcal{C}}{B}Z+ Z)$ operations. Therefore, the total computing complexity of Algorithm \ref{Alg1} is $\mathcal{O}(\Perm{\mathcal{C}}{B}Z(c(A)+\Perm{\mathcal{C}}{B}Z+1)+ Z)$.
\end{proof}

It can be noted that the optimal joint solution needs a higher computational complexity. Thus, \sref{IOC} proposes the decoupling approach to efficiently approximate the joint solution.

\section{Iterative Optimization for Coordinated Scheduling and Power Control}\label{IOC}

In this section, we present an iterative coordinated scheduling and power control policy to solve the optimization problem \eref{eq4a}. In other words, for a fixed PL, the coordinated scheduling problem is addressed, and for a fixed scheduling, the power control problem is solved. The solution requires that the cloud collects and processes the coordinated scheduling and power control levels. The iteration between these two problems is carried out upon convergence.

To solve \eref{eq4a}, notice first that for a given feasible power levels PLs, the joint optimization problem \eref{eq4a} can be simplified to a coordinated scheduling problem only, and can be expressed as follows: 
\begin{subequations}
\begin{align}
&\max\sum _{b\in \mathcal{\mathcal{B}}} \sum _{z \in \mathcal{\mathcal{Z}}} \sum _{u\in\mathcal{\tau}_{bz}(\kappa_{bz})}X_{bz}^u R_{bz} \label{eq8a} \\
&{\rm s.t.\ } Y_{b}^u=\mathcal \min\left( \sum _{z}X_{bz}^u,1 \right),~\forall~ (b,u)\in \mathcal{B} \times \mathcal{U}, \label{eq8b} \\
& \sum _{b}Y_{b}^u\leqslant1,~\forall~ u\in \mathcal{U}, \label{eq8c} \\
& \tau_{bz}(\kappa_{bz}) = \left\{u \in \mathcal{U}\ \bigg|\kappa_{bz} \cap \mathcal{W}_u = 1 \mbox{ \& } R_{bz}\leqslant R^u_{bz}\right\}, \label{eq8d}
\\& X_{bz}^u , Y_{b}^u\in \{0,1\},{\kappa}_{bz}\in \mathcal{P}(\mathcal{F}), (u,b,z)\in \mathcal{U}\times \mathcal{B}\times\mathcal{Z}.\label{eq8f}
\end{align}
\end{subequations}

The optimization is carried over the variables $X_{bz}^u$, $Y_{b}^u$, $\kappa_{bz}$ and $R_{bz}$. The variables $Y_{b}^u$, $X_{bz}^u$, and $R_{bz}$ is continues optimization parameter. Constraints \eref{eq8b} and \eref{eq8c} translate the system condition \textbf{CC1}, i.e., each user must connect to at most RRH but possibly to many RRBs within RRH frame. In order to efficiently solve \eref{eq8a}, this paper uses graph theory techniques to map the feasible points to maximum cliques in a coordinated scheduling graph as explained in section \ref{COOS}. 

On the other hand, for any given user-RRBs/RRHs coordinated scheduling, the joint problem \eref{eq4a} can be considered as a power allocation step and reduces on a per-RRB basis. For each RRB $z$, the optimization problem can be written as:
\begin{align} \label{eq9}
&\max\sum _{b\in \mathcal{B}} \sum _{u\in {\mathcal{\tau}_{bz}(\kappa_{bz})}} \log_{2}(1+\text{SINR}^{u}_{bz}(\textbf{P}))
\nonumber \\
&{\rm s.t.\ } 0\leqslant P_{bz}\leqslant P^{\max}_{bz},~\forall~ b \in \mathcal{B}, 
\end{align}
where the optimization is over the set of powers $P_{bz}$, $\forall ~b\in\mathcal{B}$, where $\mathcal{B}$ is the sets of RRHs that have associations in the fixed schedule, and $u$ is the index of user that belongs to the set of targeted users $\mathcal{\tau}_{bz}(\kappa_{bz})$ in the fixed schedule.

To solve \eref{eq9}, the corresponding optimal power levels PLs must satisfy the Karush-Kuhn-Tucker (KKT) condition as explained in \sref{PAF}. Therefore, this section aims to propose an iterative solution to solve \eref{eq4a} approximately.

\subsection{Coordinated Scheduling For Fixed Power Levels}\label{COOS}
As stated earlier, for fixed power allocation, the joint coordinated scheduling and power allocation optimization problem \eref{eq4a} is reduced to a coordinated scheduling problem \eref{eq8a} only. Thus, in this subsection, we solve \eref{eq8a} using graph theory techniques by constructing the coordinated scheduling graph, in which each vertex represents the possible association of RRHs, users, files and achievable capacities of each RRB, and then reformulates the problem. Moreover, we use an efficient solution to solve \eref{eq8a}.

Let $\mathcal{A}$ be the set of all possible associations between RRHs, RRBs, users, files and the achievable capacity, i.e., $\mathcal{A}=\mathcal{B}\times\mathcal{Z}\times \mathcal{U}\times \mathcal{F}\times\mathcal{R}$. Let the coordinated scheduling graph be denoted by $\mathcal{G}(\mathcal{V},\mathcal{E})$ wherein $\mathcal{V}$ and $\mathcal{E}$ refer to the set of vertices and edges of this graph, respectively. 

This graph is constructed by generating a vertex $v$ for each possible association $ s \in \mathcal{A}$. In the same RRB $z$ and RRH $b$, two vertices $v \in \mathcal{V}$ associated with $s \in \mathcal{A}$ and $v^\prime \in\mathcal{V}$ associated with $s^\prime \in\mathcal{A}$ are connected by an edge if one of IDNC conditions (\textbf{C1} or \textbf{C2}) in \sref{RRB} and $\varphi_r(s)=\varphi_r(s^\prime)$ are true. This satisfaction ensures that all users represented by the associations have the same transmission rate, and receive always decodable transmission. 

Similarly, Two different vertices belonging to two different RRHs/RRBs are then set adjacent if their combination results in a feasible schedule, i.e., it satisfies the system constraint \textbf{CC1}. Let vertex $v\in \mathcal{G}$ be corresponding to the association $s \in \mathcal{A}$ and vertex $v^\prime\in \mathcal{G}$ corresponding to the association $s^\prime \in \mathcal{A}$. The vertices $v$, $v^\prime$ are adjacent if one of the \textbf{General Conditions (GC)} is satisfied.
\begin{itemize}
\item {\textbf{GC1}:} $\varphi_u(s)=\varphi_u(s')$ and $\varphi_b(s)=\varphi_b(s')$, $\forall~(s,s')\in  \mathcal{A}$. This condition translates the fact that the same user can be served with multiple RRBs within the same RRH. 
\item {\textbf{GC2}:}  $\left(\varphi_b(s) =\ \varphi_b(s^\prime)~\mbox{and}~\varphi_f(s)= \varphi_f(s')\right)$ OR $\left(\varphi_b(s) =\ \varphi_b(s^\prime)~\mbox{and}~\varphi_f(s)\in \mathcal{H}_{\varphi_u(s')}~\mbox{and}~ \varphi_f(s^\prime)\in \right.$ $\left.\mathcal{H}_{\varphi_u(s)}\right)$. This condition guarantees that the encoded combinations of the same users can be served by multiple RRBs within the same RRH.
\item {\textbf{GC3}:} $\varphi_u(s) \neq\ \varphi_u(s^\prime)$ and $\left(\varphi_b(s), \varphi_z(s)\right) \neq\ \left(\varphi_b(s^\prime),\varphi_z(s')\right)$. This condition completes the adjacencies in the graph for any two vertices not opposing the CC1 constraint for any two different users. 
\end{itemize} 

\fref{fig5} shows an example of the coordinated scheduling graph for a simple network consisting of $3$ users, $3$ files, $2$ RRHs, and $1$ RRB in each RRH frame. In this example, each vertex is labeled $bzufr$, where $b$, $z$, $u$, $f$ and $r$ represent the indices of RRHs, RRBs, users, files and achievable capacities, respectively. The Dashed and solid lines in Fig. \ref{fig5} represent the edges generated by the aforementioned local conditions and general conditions \textbf{(C1)}, \textbf{(LC1)} and \textbf{GC}, respectively, and the potential cliques in this example represented in the graph by solid lines are: $\{11111, 21221\}$,$\{21332, 11111\}$, $\{21221, 11332\}$, $\{21111, 21221, 21331\}$, $\{11111, 11221, 11331\}$, $\{21112, 11222, 11332\}$, and $\{11223, 21112, 21332\}$ achieving a total throughputs of 2, 3, 3, 3, 3, 6, and 7 bits/sec, respectively. Clearly, the red colored vertices represent the last maximal clique, and should be the one selected as it maximizes the throughput for this scheduling frame.

Given the above coordinated scheduling graph' construction
$\mathcal{G}(\mathcal{V},\mathcal{E})$, it can be established that any maximal clique in the graph satisfies that all users have vertices in the selected clique receive/overhear an instantly decodable transmission (satisfy XOR-IDNC and transmission rate conditions), and then can decode a new file.
\begin{figure}[t]
\centering
\includegraphics[width=\linewidth]{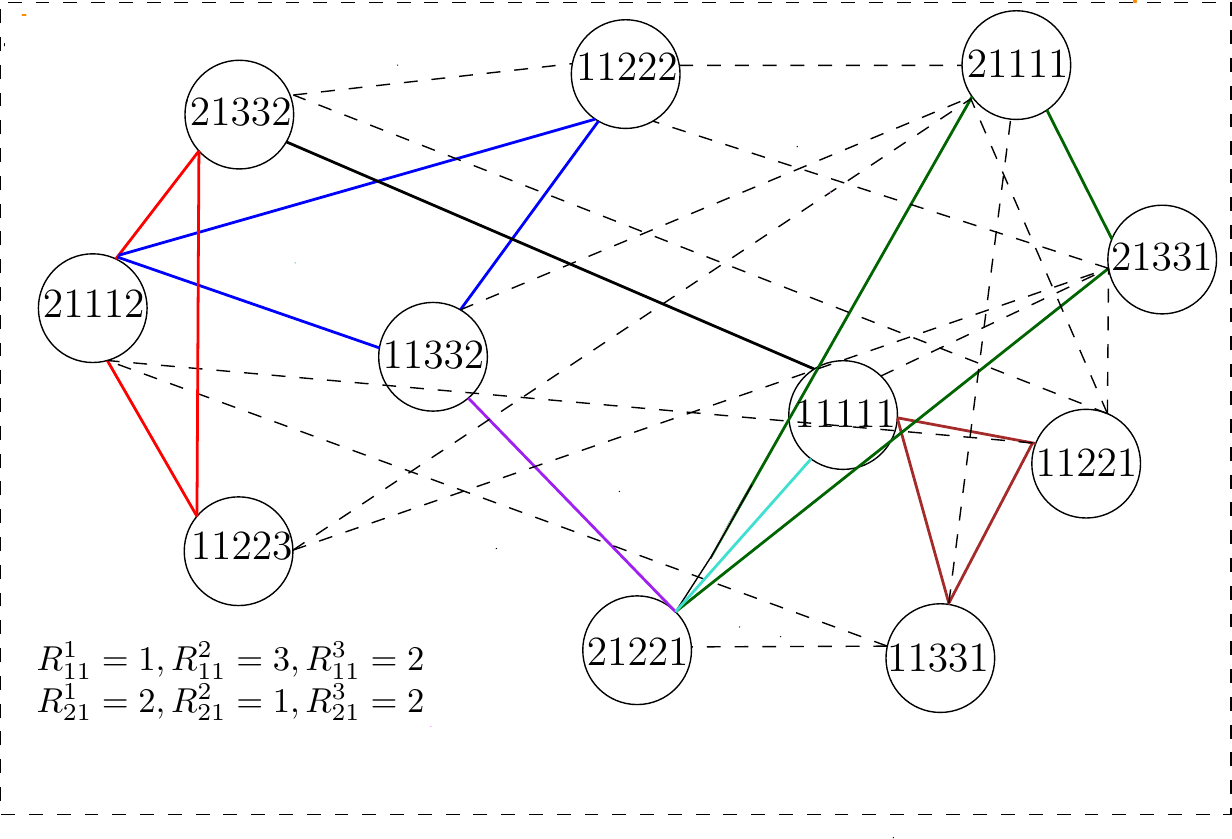}
\caption{The Coordinated Scheduling graph of the network presented in \fref{fig3} using Algorithm \ref{Alg2}.}
\label{fig5}
\end{figure}

\begin{algorithm}[t]
\caption{Coordinated Scheduling Algorithm}
\label{Alg2}
\begin{algorithmic}[]
\REQUIRE $\mathcal{U}, \mathcal{F}, \mathcal{B}, \mathcal{H}_u, \mathcal{W}_u, P_{bz},\mathcal{R}$ and $h^u_{bz}$,\\ $(b,z,u,f)\in\mathcal{B}\times\mathcal{Z}\times\mathcal{U}\times\mathcal{F}$.
\STATE Initialization: maximum-weight clique $\mathbf{M}\ = \phi$.
\STATE Construct $\mathcal{G}$ using \eref{COOS}.
\STATE Solve maximum-weight clique problem over $\mathcal{G}$ as follows:
\WHILE { $\mathcal{G} \neq\ \phi$} 
\STATE $\forall~$ vertex $v \in \mathcal{G}$: calculate $w(v)$ using \eref{eq10a}. 
\STATE Select $ v^\ast = \max_{v\in\mathcal{G}} \{{w(v)}\}$ 
\STATE Set $\mathbf{M}$ = $\mathbf{M}\ \cup v^\ast$
\STATE Set $\mathcal{G}=\mathcal{G}(v^\ast)$
\STATE Continue only with the vertices adjacent to $v^\ast$
\ENDWHILE
\STATE Output $\mathbf{M}$
\end{algorithmic}
\end{algorithm}

The following theorem characterizes the solution to the coordinated scheduling problem for fixed power allocation.
\begin{theorem}
The coordinated scheduling problem in \eref{eq8a} is equivalent to a maximum-weight clique problem over the coordinated scheduling graph, wherein the weight of a vertex $v \in \mathcal{V}$ corresponding to the association $s=(b,z,u,f,r) \in \mathcal{A}$ is given by
\begin{align} \label{eq10a}
w(v) = r.
\end{align}
\end{theorem}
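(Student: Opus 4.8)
The plan is to establish the claimed equivalence by exhibiting a correspondence between feasible schedules of \eref{eq8a} and maximal cliques of the coordinated scheduling graph, under which the objective of \eref{eq8a} coincides with the total clique weight. Because the vertex weights are simply $w(v)=r$, this reduces the scheduling problem to a maximum-weight clique search.

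First I would show that every feasible schedule induces a clique. Given variables satisfying \eref{eq8b}--\eref{eq8f}, for each targeted user $u\in\tau_{bz}(\kappa_{bz})$ I select the vertex $v$ whose association is $s=(b,z,u,f,r)$, where $f$ is the unique file that $u$ extracts from $\kappa_{bz}$ (well defined by \eref{eq8d}) and $r=R_{bz}$. I then verify pairwise adjacency using the construction in \sref{COOS}: two vertices sharing the same RRB and RRH belong to the instantly-decodable combination $\kappa_{bz}$, hence satisfy \textbf{C1} or \textbf{C2} and carry the common rate $R_{bz}$; two vertices in distinct RRB/RRH pairs satisfy exactly one of \textbf{GC1}, \textbf{GC2}, \textbf{GC3} according to whether they share a user and/or an RRH. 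Thus the selected set is a clique.

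Conversely, I would show that every maximal clique induces a feasible schedule, by setting $X_{bz}^u=1$ for each vertex $(b,z,u,f,r)$ it contains. The edge conditions enforce feasibility: within each RRB/RRH the associations form a single instantly-decodable XOR combination at a common rate, so \eref{eq8d} holds, while \textbf{GC3} forbids any user from appearing under two different RRHs, giving \eref{eq8b}--\eref{eq8c}. The delicate step is that pairwise decodability, as encoded by \textbf{C1}/\textbf{C2}, must imply joint decodability of the combined packet $\kappa_{bz}$ for all its targeted users; I would invoke the standard instant-decodability property of IDNC graphs to close this gap.

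Finally, I would match the objective to the clique weight. Since each vertex label $r$ is a candidate transmission rate not exceeding the served user's capacity, and $\varphi_r(s)=\varphi_r(s')$ forces a common rate across each RRB/RRH, the maximum-weight objective $w(v)=r$ automatically selects the largest feasible common rate, namely $R_{bz}=\min_{u}R_{bz}^u$, consistent with \eref{eq8d}. Each vertex then contributes $r=R_{bz}$ for exactly one targeted user, so the clique weight equals $\sum_{b,z}|\tau_{bz}(\kappa_{bz})|\,R_{bz}=\sum_{b,z,u}X_{bz}^u R_{bz}$, which is precisely \eref{eq8a}. Hence a schedule maximizes \eref{eq8a} if and only if its associated clique has maximum weight. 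I expect the principal obstacle to be the rigorous verification that the pairwise edge conditions are both necessary and sufficient for joint feasibility, i.e., that no feasible schedule is omitted and no infeasible set of associations is admitted as a clique.
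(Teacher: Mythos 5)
Your proposal is correct and follows essentially the same route as the paper's proof, which likewise establishes a one-to-one correspondence between feasible schedules and cliques of the coordinated scheduling graph and then identifies the clique weight with the objective of \eref{eq8a} (the paper only sketches this and defers the details to Appendix A of \cite{1}). Your additional care about pairwise versus joint decodability is a worthwhile refinement of the same argument, and it does close cleanly: pairwise satisfaction of \textbf{C1}/\textbf{C2} within a clique guarantees $|\mathcal{W}_u\cap\kappa_{bz}|=1$ for every targeted user, which is the standard IDNC clique property.
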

\begin{proof}
This theorem can be proved by demonstrating the following facts. The first fact establishes a one-to-one mapping between the feasible schedules and the cliques in the coordinated scheduling graph. Afterward, the weight of each vertex is set to be the contribution of the corresponding user to the network. Therefore, the maximum weight clique is a feasible solution with the maximum received-throughput. In other words, the maximum-weight clique is the solution to \eref{eq8a}. The complete proof can be found in Appendix A in \cite{1}.
\end{proof}

In \cite{16}, it is shown that the maximum-weight clique problem is NP-hard. However, it is established that it can be optimally solved with reduced complexity as compared the $\mathcal{O}(|\mathcal{V}|^2.2^{|\mathcal{V}|})$ naive exhaustive search methods, e.g., the optimal algorithms in \cite{2155446,17,16513519}. The maximum-weight clique problem can be solved in linear time with respect to its size using the simple heuristic proposed shown in Algorithm \ref{Alg2}. 

\subsection{Power Allocation For Fixed Schedule} \label{PAF}
The power allocation step assumes a fixed user-RRBs/RRHs schedule and finds the optimal power levels PL of each RRB in \eref{eq9}. It can be easily noted that \eref{eq9} is a well known non-convex optimization problem, and its global solution is not feasible. Thus, the remaining of this section focuses on the numerical solution to achieve at least a local optimum solution. Therefore, our focus in this section is to solve the optimization problem \eref{eq9} using the Karush-Kuhn-Tucker (KKT) iteration approach. In particular, for a given user-RRB/RRH selections, the corresponding optimal set of powers must satisfy the first derivative, i.e., KKT condition. Therefore, the objective function of the problem \eref{eq9} which is optimized over the set of power on a RRB-by-RRB basis, can be expressed as:

\begin{align} \label{eq10} \nonumber
& R(P^z_1,P^z_2,...,P^z_B)=\\& \sum _{b\in \mathcal{B}} \sum _{u\in {\mathcal{\tau}_{bz}(\kappa_{bz})}} \nonumber \log_{2}\bigg(1+\cfrac{P_{bz} |h^{u}_{bz}|^{2}}{\sigma^2 +\sum\limits_{b'\neq b}P_{b'z}| h^{u}_{b'z}|^{2}}\bigg)
\\
&{\rm s.t.\ } 0\leqslant P_{bz}\leqslant P^{\max}_{bz},~\forall~ b\in \mathcal{B}. 
\end{align}

 We start by taking the first derivative of the objective function \eref{eq10} with respect to $P_{bz}$: 
\begin{align}
&\frac{\partial R}{\partial P_{bz}}=\frac{\partial }{\partial P_{bz}}\sum _{u\in {\mathcal{\tau}_{bz}(\kappa_{bz})}} \log_{2}\bigg(1+\cfrac{P_{bz} |h^{u}_{bz}|^{2}}{\sigma^2 +\sum\limits_{b'\neq b}P_{b'z}| \nonumber h^{u}_{b'z}|^{2}}\bigg)\\&+\frac{\partial }{\partial P_{bz}} \sum\limits_{b'\neq b}\sum _{u'\in {\mathcal{\tau}_{b'z}(\kappa_{b'z})}}\nonumber 
{\log_{2} \bigg(1+\cfrac{P_{b'z} |h^{u'}_{b'z}|^{2}}{\sigma^2 +\sum\limits_{bb\neq b'}P_{bbz}| h^{u'}_{bbz}|^{2}}\bigg)}\\&=\frac{1}{P_{bz}}\sum \nonumber _{u\in{\mathcal{\tau}_{bz}(\kappa_{bz})}} \left( \frac{\text{SINR}^{u}_{bz}}{1+\text{SINR}^{u}_{bz}}\right)\\&-\sum\limits_{b'\neq b}\sum _{u'\in{\mathcal{\tau}_{b'z}(\kappa_{b'z})}}\frac{|h^{u'}_{bz}|^{2}}{P_{b'z} |h^{u'}_{b'z}|^{2}} \left( \frac{(\text{SINR}^{u'}_{b'z})^2}{1+\text{SINR}^{u'}_{b'z}}\right)
\end{align}

where,
\begin{align}
&\ \text{SINR}^{u'}_{b'z}= \bigg(1+\cfrac{P_{b'z} |h^{u'}_{b'z}|^{2}}{\sigma^2 +\sum\limits_{bb\neq b'}P_{bbz}| h^{u'}_{bbz}|^{2}}\bigg)\label{eq11} 
\end{align}
and $u\in {\mathcal{\tau}_{bz}(\kappa_{bz})}$ and $u'\in {\mathcal{\tau}_{b'z}(\kappa_{b'z})}$ are the scheduled users of the $b$-th RRH, and the $b'$-th RRH at the $z$th RRB for $\forall ~b$ and $~b'$ $\in\mathcal{B}$, respectively. By letting the above gradient equal to zero and by manipulating the optimality condition, one can obtain this manipulation for optimizing the power: 
\begin{align} 
&\frac{1}{P_{bz}}\sum \nonumber _{u\in{\mathcal{\tau}_{bz}(\kappa_{bz})}} \left( \frac{\text{SINR}^{u}_{bz}}{1+\text{SINR}^{u}_{bz}}\right)\\&=\sum\limits_{b'\neq b}\sum _{u'\in{\mathcal{\tau}_{b'z}(\kappa_{b'z})}}\frac{|h^{u'}_{bz}|^{2}}{P_{b'z} |h^{u'}_{b'z}|^{2}} \left( \frac{(\text{SINR}^{u'}_{b'z})^2}{1+\text{SINR}^{u'}_{b'z}}\right)
\label{eq13}
\end{align} 
Therefore, 
\begin{align}
&P_{bz}=&\frac{\sum\limits_{u\in{\mathcal{\tau}_{bz}(\kappa_{bz})}}\left(\frac{\text{SINR}^{u}_{bz}}{1+\text{SINR}^{u}_{bz}}\right)}{\sum\limits_{b'\neq b}\sum\limits_{u'\in{\mathcal{\tau}_{b'z}(\kappa_{b'z})}}\frac{|h^{u'}_{bz}|^{2}}{P_{b'z} |h^{u'}_{b'z}|^{2}} \left( \frac{(\text{SINR}^{u'}_{b'z})^2}{1+\text{SINR}^{u'}_{b'z}}\right)}
\label{eq14}
\end{align}
The KKT condition \eref{eq14} is essentially a water-filling condition if the dominator term is fixed. In this case, \eref{eq14} gives the following power update equation, i.e., “KKT method”: (for more details see \cite{4}, \cite{W.} and references therein).
\begin{align}
&P_{bz,new}=&\left[ \frac{\sum\limits_{u\in{\mathcal{\tau}_{bz}(\kappa_{bz})}} \left(\frac{\text{SINR}^{u}_{bz}}{1+\text{SINR}^{u}_{bz}}\right)}{\sum\limits_{b'\neq b}t_{b'z}}
\right]^{P_{bz}^{\max}}_{0} \label{eq15} 
\end{align}
where, \begin{align}
&\ t_{b'z}= \sum _{u'\in {\mathcal{\tau}_{b'z}(\kappa_{b'z})}}\frac{|h^{u'}_{bz}|^{2}}{P_{b'z} |h^{u'}_{b'z}|^{2}} \left( \frac{(\text{SINR}^{u'}_{b'z})^2}{1+\text{SINR}^{u'}_{b'z}}\right) \label{eq16} 
\end{align}

It is important to note that, the nominator in the right hand side of \eref{eq15} represents the effect power of $z$-th RRB in $b$-th RRH on all corresponding RRHs in the schedule, i.e., it is the derivative of the $b'$-th RRHs terms with respect to the $z$-th RRB power in the $b$th RRH. In other words, it summarizes the interfering effect of $P_{bz}$ on the $b'$th RRH. Moreover, it depends on the transmit power, SINR and the ratio of the direct and the interfering channel gains. The dominator in the right hand side of \eref{eq15} shows the effect of the combined noise and interference in RRB $z$ of the RRH $b$.
\subsection{Proposed Iterative Algorithm}
To summarize the iterative solution in this section, for fixed power, we solve the coordinated scheduling problem as explained in \sref{COOS} and Algorithm \ref{Alg2}. Afterwards, for a fixed schedule, the power allocation problem is solved as in \sref{PAF} and updated based on \eref{eq15}. The Process of iterating between coordinated scheduling and power allocation steps continue until convergence. Neither the coordinated scheduling step nor the power allocation step is nondecreasing in the optimization objective. Therefore, the iterations converge.
\section{Simulation Results} \label{SR}
This section shows the performance of the proposed solutions in the downlink of a C-RAN described in \sref{SMMM}. The network model and the physical layer model are implemented in MATLAB. The total number of RRHs is fixed to $3$. Users are uniformly distributed within the cell. To study the performance of the proposed algorithms in various scenarios, number of users, number of RRBs per each RRH frames, maximum power $P^{\max}$, cell size $C$, and distribution of the side information vary so as to study multiple scenarios. The additional simulation parameters are summarized in \tref{table_example}. The performance of the proposed solution is compared to the state-of-the-art coded and uncoded methods. In particular, the implemented schemes in this paper are: 
\begin{itemize}
\item Classical IDNC (rate-unaware scheme): This scheme
jointly optimizes the selection of an XOR file combination
for each RRB in each RRH without considering the achievable capacities of users. After the file selection process, the CRAN's physical-layer employs the minimum achievable capacity of all users targeted by each RRB as its transmitting rate.
\item RLNC (rate-greedy scheme): In this scheme, each user is
associated with a single RRB to which it has the maximum
capacity. If more than one user is associated with
the same RRB, random linear network coding (RLNC)
is employed. The encoding is done irrespectively of the side information. Indeed, as stated earlier, RLNC mixes
all files with different random coefficients. The selected
transmission rate in each RRB is the capacity of users
having the minimum achievable capacity in that RRB.
\item Maximum power transmission scheme: In this scheme, the transmission power of the RRBs is set to the maximum power levels $P^{\max}$. 
\item Joint coordinated scheduling and power control (uncoded scheme): In this scheme, only one user is served in each RRB, each user can be assigned to more than one RRBs from the same RRH. The user-RRB association is proposed in \cite{18a} so as to maximize the sum rate of the CRAN. 
\item CRAN-IDNC (rate-aware scheme): This scheme is described in Section \ref{JSP}.
\item Iterative coordinated scheduling and power allocation (rate-aware scheme): This scheme is described in Section \ref{IOC}.
\end{itemize}
\begin{table}[!t]
\renewcommand{\arraystretch}{0.9}
\caption{Simulation Parameters}
\label{table_example}
\centering
\begin{tabular}{|c|c|}
\hline
Cellular Layout & Hexagonal \\
%\hline
%Cell Diameter & 500 meters \\
\hline
Channel Model & SUI-Terrain type B \\
\hline
Channel Estimation & Perfect \\
\hline
Background Noise Power &-168.60 dBm/Hz\\
\hline
Shadowing Variance & 0dB \\
\hline
Bandwidth &10 MHz\\
\hline
\end{tabular}
\end {table}

\begin{figure}[t]
\centering
\includegraphics[width=\linewidth]{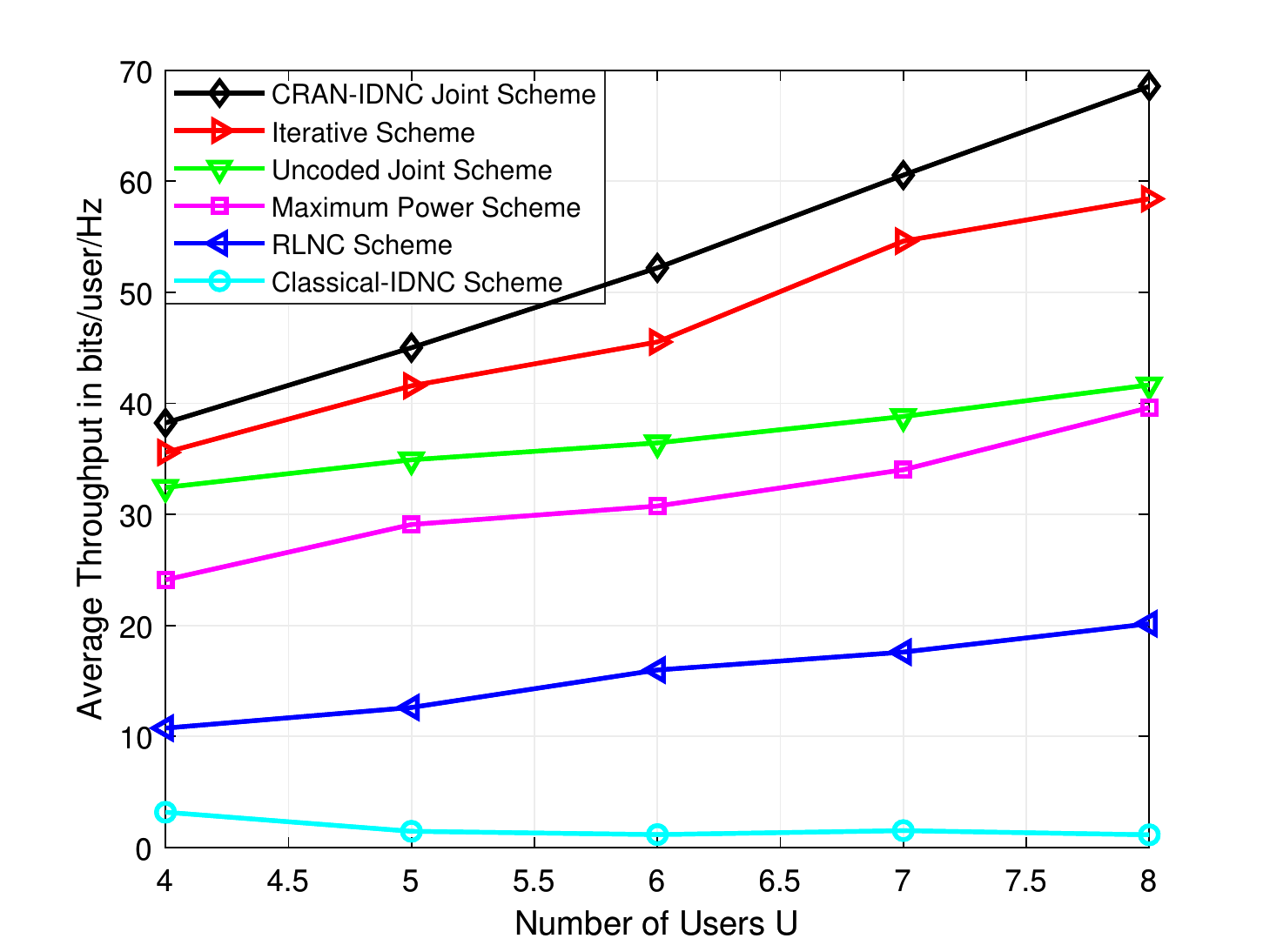}
\caption{Average Throughput in bits/user/Hz. Vs the number of users $U$. Number of RRB is $2$ each send with maximum power $P^{\max} = -42.60$ dBm, and cell size $C = 500$m.}
\label{fig6}
\end{figure}

\begin{figure}[t]
\centering
\includegraphics[width=\linewidth]{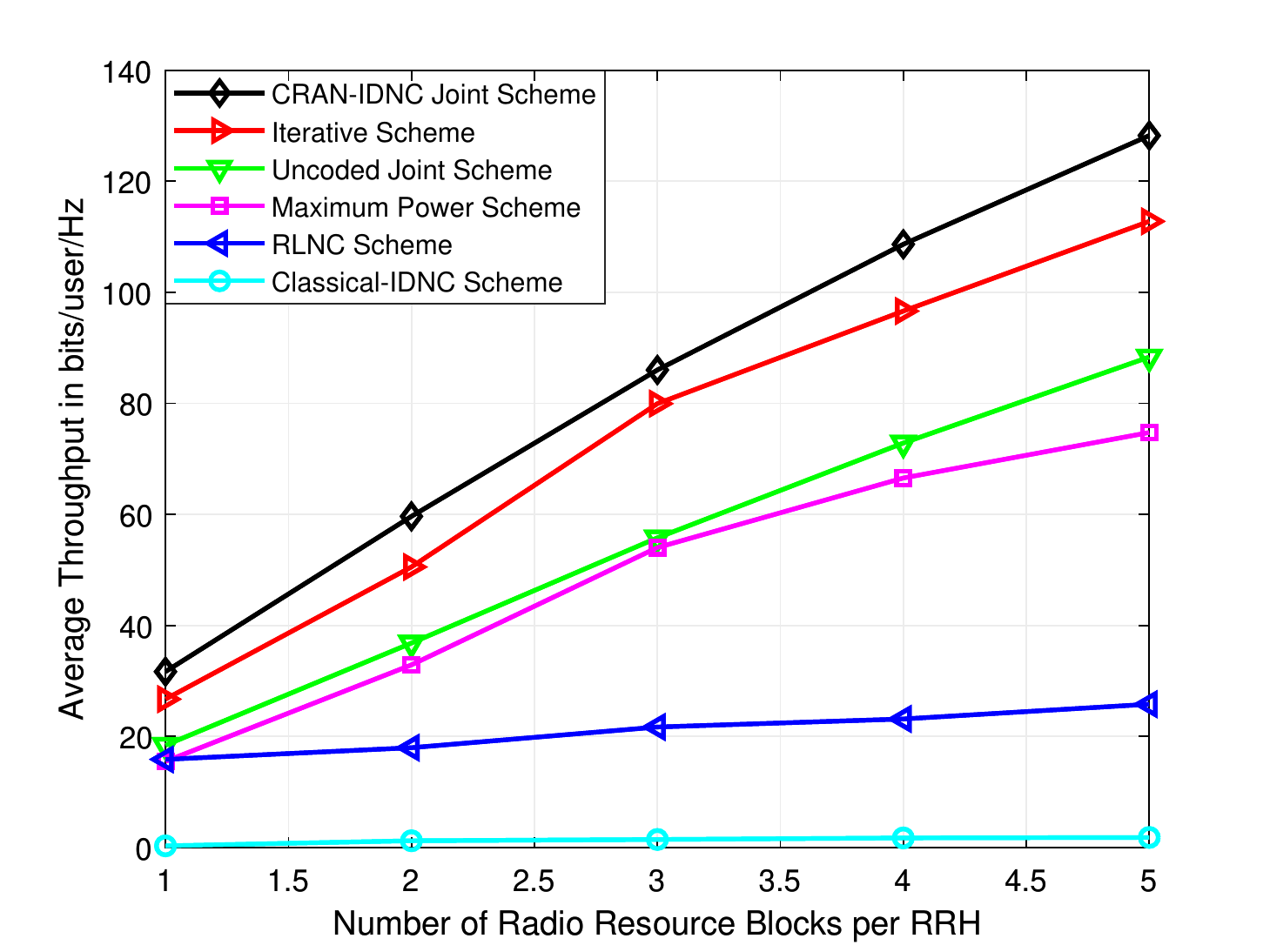}
\caption{Average Throughput in bits/user/Hz. Vs the number of Radio Resource Blocks $Z$. Number of users is $7$, a maximum transmit power $P^{\max} = -42.60$ dBm, and cell size $C = 500$m.}
\label{fig7}
\end{figure}
\fref{fig6} depicts the average throughput in bits/user/Hz achieved by our proposed algorithms and the aforementioned schemes for different numbers of users $U$, given a CRAN composed of $2$ RRBs per RRH's frame, a file size $N=1$Mb, a maximum transmit power $P^{\max} = -42.60$ dBm, and a cell size $C = 500$m. From the figure, we note that our proposed CRAN-IDNC scheme outperforms the iterative and the other schemes. In particular, the joint optimal uncoded scheme only focuses on the high achievable rates at the expense of transmitting at most one file to a single user from each RRB in all RRHs, i.e., a maximum number of targeted users is $Z_{tot}$. On the other hand, the maximum power and the RLNC schemes serve a good number of users in each transmission but sacrifice the optimality of the power and the rate. One can also notice that the gap between our proposed scheme and the other schemes increases as the number of users increases. The gain is due to the fact that the proposed scheme benefits from the increasing number of users by mixing the flows of more and more users to the same RRB plus the role of the CRAN-IDNC scheme as an interference mitigating technique that increases with the increase in the number of users. 

\fref{fig7} shows the average throughput in bits/user/Hz versus the numbers of RRBs $Z$ for a CRAN composed of  $7$ users, a file size $N=1$Mb, a maximum transmit power $P^{\max} = -42.60$ dBm, and cell size $C = 500$m. Again, the figure shows that our proposed CRAN-IDNC scheme outperforms all other schemes. The gap in performance increases as the number of RRBs per frame grows. It can also be easily seen from the figure that the performances of both proposed schemes and the joint optimal uncoded scheme increase linearly with the increase in the number of RRBs with a fixed number of users. In fact, all schemes agree in serving the same user in different RRBs of the same RRH. Therefore, increases in the number of RRBs increases the total received throughput. It can be noted from the figure that as the number of RRBs increases our CRAN-IDNC scheme outperforms the iterative solution. This can be explained by the fact that, increasing the number of RRBs leads to more and more power control subgraphs. Thus the size of the search space becomes larger, which works in favour of our CRAN-IDNC algorithm.

\fref{fig8} plots the average throughput as a function of the file size $N$ in a CRAN system composed of $7$ users and $2$ RRBs per RRH's transmit frame, each RRB has a maximum transmit power $P^{\max} = -42.60$ dBm, and a cell size $C = 500$m. As the file's size increases, the performance of all schemes increases. The figure shows that all schemes increase linearly with the size of the file. This can be explained by the fact that, as the size of the file increases, more and more bits are received, thus increasing the average received throughput. 
\begin{figure}[t]
\centering
\includegraphics[width=\linewidth]{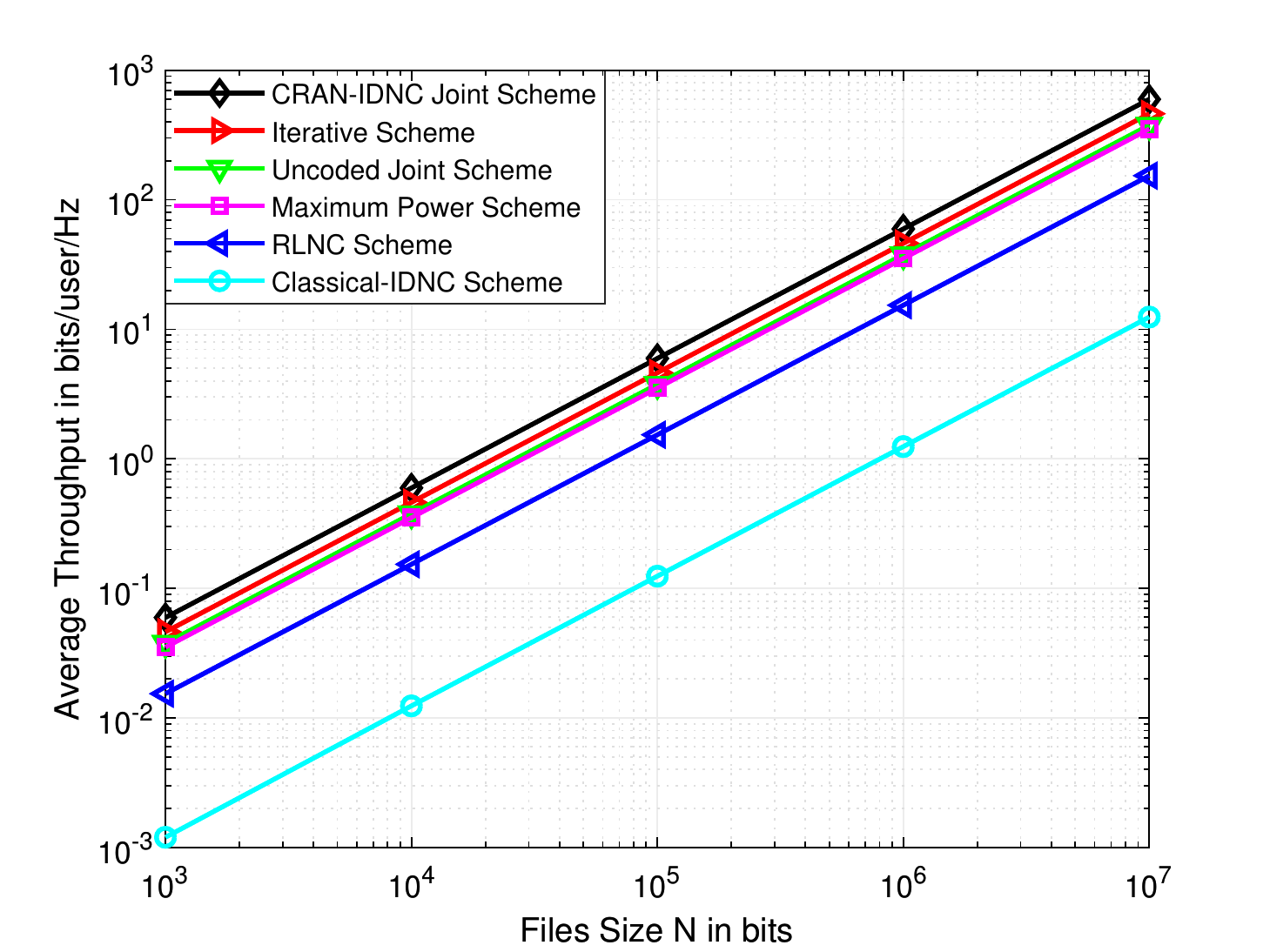}
\caption{Average Throughput in bits/user/Hz. Vs the file's size $N$ in bits. Number of users is $7$, number of RRBs is $2$ each has a maximum transmit power $P^{\max} = -42.60$ dBm, and cell size $C = 500$m.}
\label{fig8}
\end{figure}

\fref{fig9} plots the average throughput in bits/user/Hz versus
the maximum power $P^{\max}$, for a CRAN setting composed of $2$ RRBs in each RRH frame, $7$ users, a file size $N=1$Mb, and cell size $C = 500$m. The figure shows that our CRAN-IDNC optimization algorithm outperforms all other schemes, particularly for large maximum power. The increase in performance can be explained by the fact that as the maximum allowed power increases, the inter-RRHs interference increases, which works in favor of CRAN-IDNC algorithm as a method for interference reduction.

\fref{fig10} plots the average throughput in bits/user/Hz versus the cell size $C$, for a CRAN setting composed of $7$ users, $2$ RRBs in each RRH' frames each RRB has a maximum allowed power $P^{\max} = -26.98$ dBm, and a file size $N=1$Mb. The proposed CRAN-IDNC algorithm largely outperforms the iterative solution particularly for small cell network, i.e., high interference level. As the cell size increases, i.e., low interference level, the performance of CRAN-IDNC solution over the iterative one decreases and gains $10\%$ improvement. Finally, despite its great merits in reducing the completion time of a frame of files in many prior works, Classical IDNC exhibits a very poor performance from a physical layer perspective, thus voiding its upper layer gains. This clearly shows the importance of rate-awareness in IDNC code design to achieve gains on both the upper and physical layers, thus leading to a real reduction in file delivery times.

\begin{figure}[t]
\centering
\includegraphics[width=\linewidth]{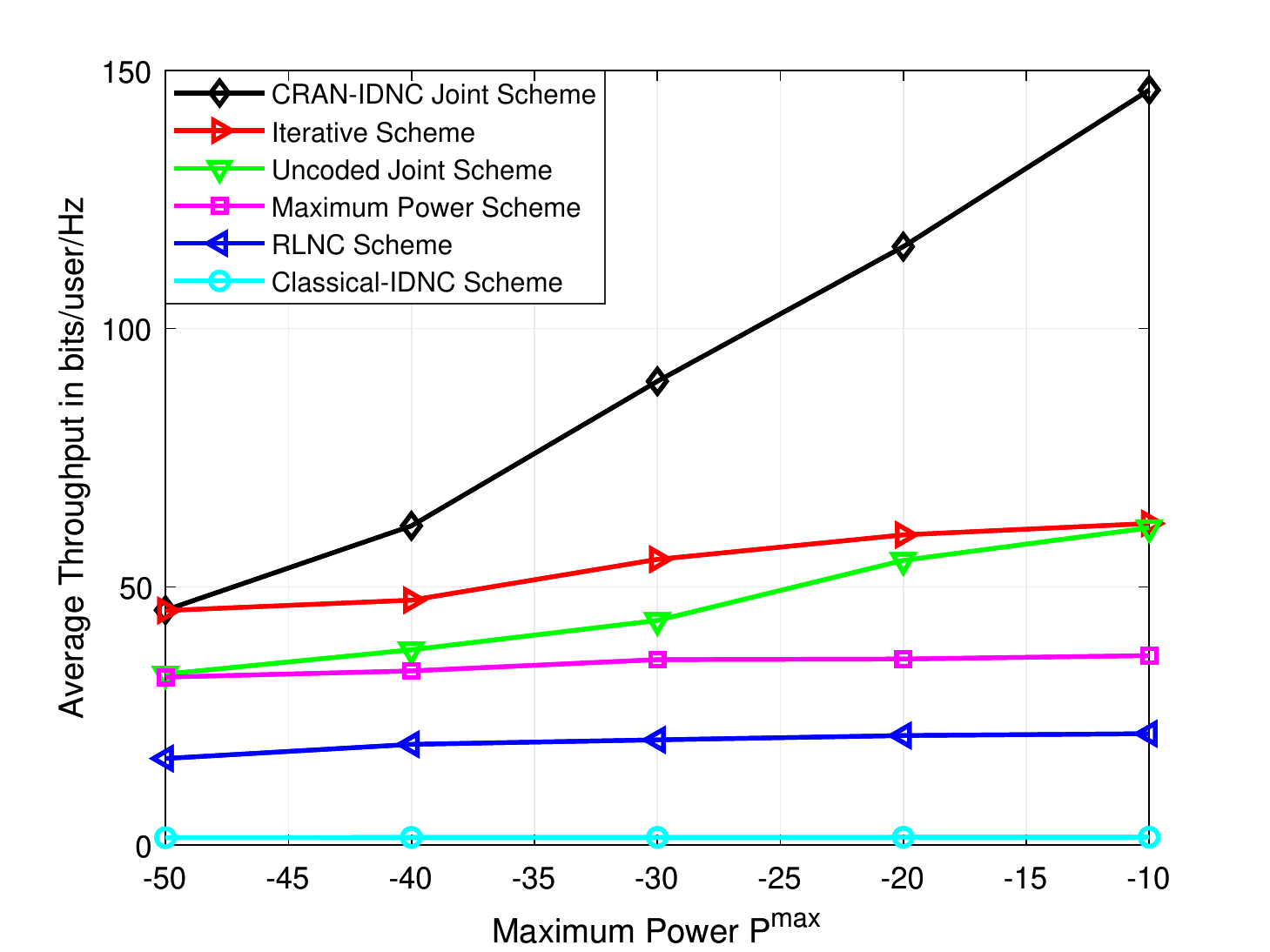}
\caption{Average Throughput in bits/user/Hz. Vs Maximum Power $P^{\max}$. Number of users is $7$, number of RRBs is $2$, file size is $1$Mb, and cell size $C = 500$m.}
\label{fig9}
\end{figure}

\begin{figure}[t]
\centering
\includegraphics[width=\linewidth]{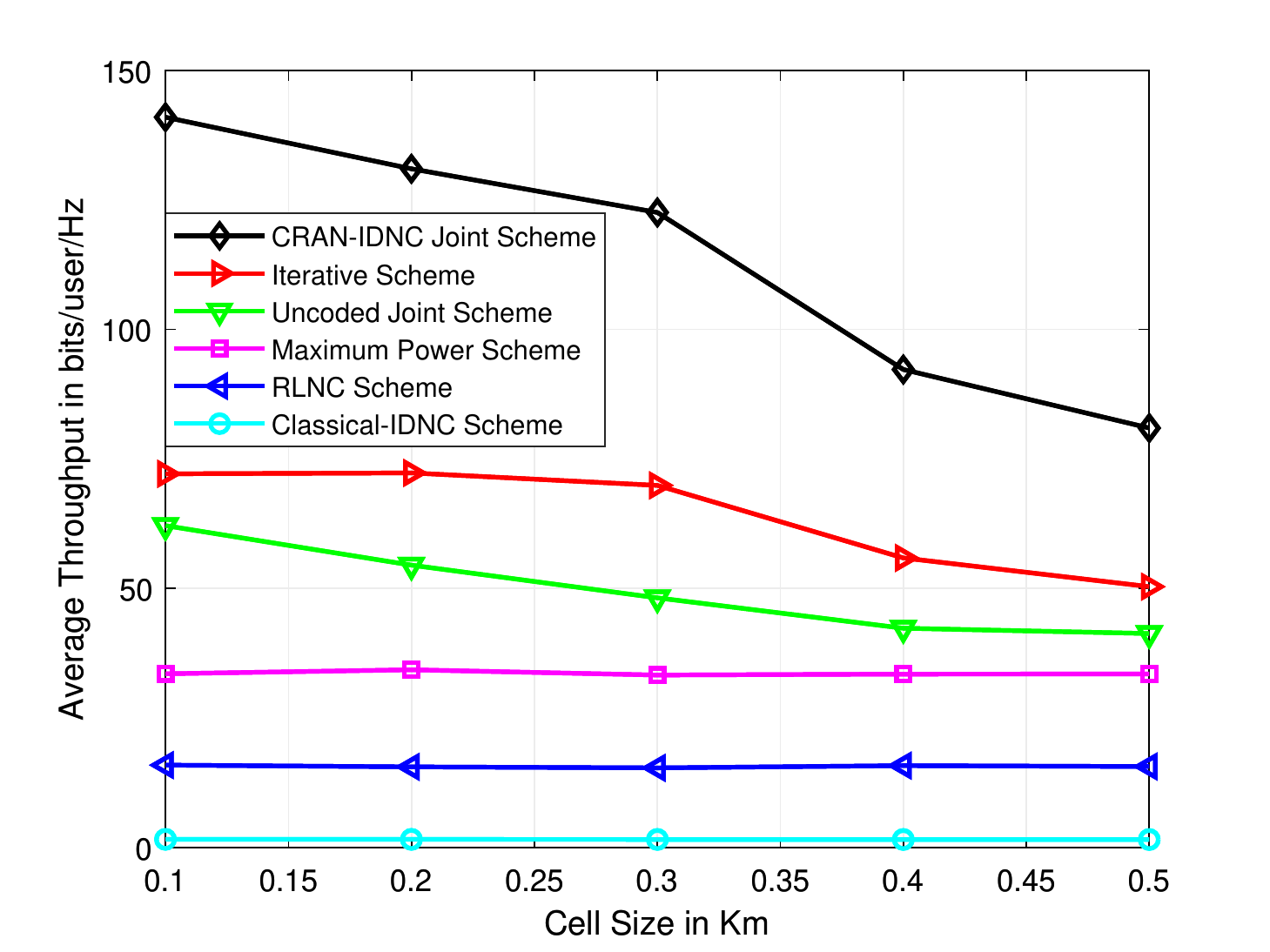}
\caption{Average Throughput in bits/user/Hz. Vs Cell Size $C$ in $K_m$. Number of users is $7$, number of RRBs is $2$ each has a maximum transmit power $P^{\max} = -26.98$ dBm, file size is $1$Mb.}
\label{fig10}
\end{figure}

\section{Conclusion} \label{CC}
Interference mitigation and resource blocks' efficient exploitation in Next Generation System 5G is an emerging topic of interest. This paper investigates the cross-layer optimization in cloud-enabled networks in order to solve the throughput maximization problem. Unlike previous studies that only considered the CRAN system from a physical layer perspective, we proposed to use the information available in the network to combine files using instantly decodable network coding. Therefore, the throughput maximization problem becomes the same as the problem of assigning users efficiently to the available resource blocks, choosing the file combination and the power allocations (PLs) of each under the constraint that a user can connect to at most a single remote radio head but to many resource blocks within it. A graph theoretical approach is proposed to solve the problem by designing the CRAN-IDNC graph formed by several power control subgraphs. By establishing a correspondence between the feasible solution to the problem and the cliques in the graph, the problem is shown to be equivalent to a maximum-weight clique which can be efficiently solved using state of the art methods. Simulation results show the performance of the proposed two solutions and reveal that they outperform uncoded and rate-unaware coding solutions.

%\bibliographystyle{IEEEtran}
%\bibliography{citations}

\end{document}